\newtheorem{theorem}{Theorem}
\newtheorem{remark}{\indent \bf Remark}[section]
\newcommand{\Eb}{\frac{E_b}{N_0}}
\newcommand{\Ebp}{\left( \Eb \right)}
\newcommand{\Ebt}{\tilde{\mathcal{E}}}
\newcommand{\Ce}{{\sf C}}
\def\SINR {\mathrm{SINR}}
\def\ex {\mathrm{e}}
\begin{document}

\title{Bandwidth Partitioning in Decentralized Wireless Networks}
\author{Nihar Jindal,~\IEEEmembership{Member,~IEEE,} Jeffrey G. Andrews,~\IEEEmembership{Senior Member,~IEEE,} \\ Steven Weber,~\IEEEmembership{Member,~IEEE}
\thanks{The authors are with the ECE Departments of the University of
Minnesota, the University of Texas at Austin, and Drexel
University, respectively. e-mail: nihar@umn.edu,
jandrews@ece.utexas.edu, sweber@ece.drexel.edu.}
\thanks{This work is supported under an NSF collaborative research 
grant awarded to the three authors (NSF grant \#0635003 (Weber), \#0634979
(Andrews), and \#0634763 (Jindal)), and by the DARPA IT-MANET
program, Grant W911NF-07-1-0028 (Andrews, Jindal, Weber). Earlier
versions of this work appeared at the ITA workshop \cite{JinAndWeb_ITA07}
and ISIT \cite{JinAndWeb_ISIT07} in Jan. and June 2007, respectively.}
}
\maketitle

\begin{abstract}
This paper addresses the following question, which is of interest in
the design of a multiuser decentralized network. Given a total
system bandwidth of $W$ Hz and a fixed data rate constraint of $R$
bps for each transmission, how many frequency slots $N$ of size
$W/N$ should the band be partitioned into in order to maximize the
number of simultaneous links in the network?  Dividing the available
spectrum results in two competing effects.  On the positive side, a
larger $N$ allows for more parallel, non-interfering communications
to take place in the same area. On the negative side, a larger $N$
increases the SINR requirement for each link because the same
information rate must be achieved over less bandwidth. Exploring
this tradeoff and determining the optimum value of $N$ in terms of
the system parameters is the focus of the paper. Using stochastic
geometry, the optimal SINR threshold -- which directly corresponds
to the optimal spectral efficiency -- is derived for both the low
SNR (power-limited) and high SNR (interference-limited) regimes.
This leads to the optimum choice of the number of frequency bands
$N$ in terms of the path loss exponent, power and noise spectral
density, desired rate, and total bandwidth.

\end{abstract}

\section{Introduction}
For purposes of wireless communication, the electromagnetic
spectrum is typically first divided into a large number of bands
by regulatory agencies such as the FCC or the European Commission.
These bands are typically allocated by executive fiat or auction,
and for particular purposes.
Once allocated, these bands are usually further divided into many
smaller bands that individual users have access to.  This entire
process has a major impact on the efficiency with which spectral
resources are used, but historically appears to have been done in a
mostly ad hoc manner. This paper attempts to develop a theoretical
basis for bandwidth partitioning, in particular the second
partitioning of an allocated band into subbands.

To be more specific, consider a spatially distributed wireless
network, representing either an ad hoc network or an unlicensed (and
uncoordinated) spectrum system, e.g., 802.11. In such systems it is
common to have a fixed total bandwidth, a large number of potential
users, and a limit on acceptable packet loss rates. It is also
typical to have a target data rate for each user, either to support
a certain application or due to user expectations.
This gives rise to the following basic question: given bandwidth $W$
and a fixed rate requirement $R$ for each transmitter-receiver link
in the network, how many slots $N$ of size $W/N$ should this band be
partitioned into in order to maximize the number of links (i.e.,
spatial density of transmissions) that can achieve this rate $R$ at
a specified outage probability (i.e., packet error rate)?

For example, given 1 MHz of bandwidth and a desired rate of 1 Mbps,
should  (a) each transmitter utilize the entire spectrum and thus
require an SINR of 1 (utilizing $R=W \log_2(1 + \SINR)$ bits/sec),
(b) the band be split into two 0.5 MHz sub-bands where each
transmitter utilizes one of the sub-bands with a required SINR of 3,
or (c) the band be split into $N > 2$ orthogonal $\frac{1}{N}$ MHz
sub-bands where each transmitter utilizes one of the sub-bands with
a required SINR of $2^N-1$?

Increasing the number of sub-bands $N$ has two competing effects. On
the positive side, it allows for parallel, non-interfering
communications on different sub-bands.  On the negative side,
transmitting at the same data rate over less bandwidth requires each
transmission to be performed at a higher spectral efficiency ($R$
bps over $\frac{W}{N}$ Hz corresponds to a spectral efficiency of
$\frac{R}{W/N}$ bps/Hz), which translates to a higher SINR
requirement and thus a larger interference-free area.
The objective of this paper is understanding this tradeoff and
characterizing the optimum value of $N$ in terms of the system
parameters.

\subsection{Technical Approach}
To allow for analytical tractability, we optimize the number of
sub-bands for a network consisting of transmitter-receiver pairs
distributed on the two-dimensional plane. More specifically, the
network we consider has the following key characteristics:
\begin{itemize}
    \item Transmitter locations are a realization of a homogeneous
    spatial Poisson process.
    \item Each transmitter communicates with a single receiver
that is a distance $d$ meters away.
    \item All transmissions occur at power $\rho$ and rate $R$ bits/sec, the
    noise spectral density is $N_0$, and attenuation follows path-loss
    exponent $\alpha$.
    \item The system bandwidth of $W$ Hz is divided into $N$ equal sub-bands of $\frac{W}{N}$
    Hz, and each transmission occurs on a randomly chosen sub-band.
    \item Each receiver treats multi-user interference as noise, and
    thus a transmission is successful if and only if the received SINR is
    larger than a threshold determined by $R, W$, and $N$.
\end{itemize}
The second to last assumption should make it clear that we are
considering only an \textit{off-line} optimization of the frequency
band structure, and that no on-line (e.g., channel- and queue-based)
transmission or sub-band decisions are considered.

By considering such a network, tools from stochastic geometry can be
used to characterize the distribution of received interference and
thus to quantify the success probability of each transmission as a
function of the transmitter density and the SINR threshold.
In this context, the question at hand is determining the value of
$N$ that maximizes success probability for a given spatial density
of transmitters.
Rather than considering the optimization in terms of $N$, it is
convenient to pose the problem in terms of the {\em spectral
efficiency} of each communication $\frac{R}{W/N}$. Our main result
is an exact characterization of the optimal spectral efficiency in
the form of a simple fixed point equation.\footnote{Because SINR is
a function of spectral efficiency, this is equivalent to a
derivation of the optimal SINR threshold.} Furthermore, the optimal
spectral efficiency is seen to be a function only of the path-loss
exponent and the energy per information bit $\Eb = \frac{P}{N_0 R}$
(where $P$ is the received power, $N_0$ is the noise spectral
density, and $R$ is the rate \cite{verdu_wide}), and thus is
independent of the transmitter density.
In order for a network to operate optimally, $N$ should be increased
until the spectral efficiency $\left( \frac{NR}{W} \right)$ is equal
to its optimal value.

When thermal noise is negligible relative to the received signal
power (i.e., $\Eb \rightarrow \infty$), the network is purely
interference-limited and the optimal spectral efficiency is a
function of the path loss exponent ($\alpha$) alone.  For reasonable
path loss exponents the optimal spectral efficiency lies between the
low-SNR and high-SNR regimes.  For example, the optimal is $1.3$
bps/Hz (SINR threshold of $1.6$ dB) and $2.3$ bps/Hz (SINR threshold
of $5.9$ dB) for $\alpha=3$ and $\alpha=4$, respectively. When
thermal noise is not negligible (i.e., $\Eb$ is small), the optimal
spectral efficiency is shown to be the fraction $\left(1 -
\frac{2}{\alpha} \right)$ of the maximum spectral efficiency
achievable in the absence of interference.

Increasing $N$, which corresponds to decreasing the bandwidth and
increasing the area consumed by each transmission, is seen to be
beneficial as long as area (i.e., the SINR threshold) increases at a
reasonable rate with $N$. For interference-limited networks this is
true until the high-SNR regime is reached, at which point a huge
SINR increase is required for any additional bandwidth reduction.
For power-limited networks this is true until the SINR threshold
approaches the interference-free SNR, at which point the system
becomes overly sensitive to interference.

\subsection{Related Work}

The problem studied in this work is essentially the optimization of
frequency reuse in uncoordinated decentralized networks, which is a
well studied problem in the context of centrally-planned cellular
and other hierarchical networks; see for example
\cite{Rappaport,AndBook,YeuNan96} and references therein. In both
settings the tradeoff is between the bandwidth utilized per cell or
transmission -- which is inversely proportional to the frequency
reuse factor -- and the achieved SINR/spectral efficiency per
transmission. A key difference is that regular frequency reuse
patterns can be used in cellular networks, whereas in an ad hoc or
unlicensed network this is impossible. Another crucial difference is
in terms of analytical tractability. Although there has been a
tremendous amount of work on frequency reuse optimization for
cellular networks, these efforts generally do not lead to clean
analytical results.
On the contrary, in this work we are able to derive simple
analytical results for decentralized networks that cleanly show the
dependence of the optimal reuse factor on basic system parameters.

A number of works have considered related problems in the context of
decentralized networks, although none appear to have investigated
the optimization considered here.  In \cite{PursleyRoyster07} the
time-bandwidth-area product achieved by different codes are
evaluated.  This metric is essentially equivalent to the inverse of
transmission density in our network model, but the authors do not
pursue optimization of this metric, which is the essence of our
work. In \cite{EhsanCruz06}, the authors jointly optimize rate,
transmitter-receiver distance ($d$), and density in order to
maximize the transport capacity (i.e., product of rate and distance)
of a random-access network. This setting is very different from our
framework in which we assume a fixed rate and transmitter-receive
distance, and as a result conclusions differ significantly. For
example, the optimum SINR threshold in \cite{EhsanCruz06} for some
networks is found to be orders of magnitude smaller than 0 dB,
whereas we find optimal values around 0 dB. In
\cite{EbrahimiKhandani07} a network consisting of a large number of
interfering transmitter-receiver pairs is analyzed, but no spatial
model is used and only fading is considered. In
\cite{Sikora_Laneman_Haenggi} the issue of frequency reuse is
considered in a one-dimensional, evenly spaced, multi-hop wireless
network. Some similar general insights are derived, but the regular
spacing of interferers seems to prevent derivation of clean
analytical results as is possible for the 2-D network considered
here.  In a recent contribution the interactions between multiple
random-access networks have been considered from a game-theoretic
perspective \cite{GrokopTse07}, and portions of the analysis of a
single network in \cite{GrokopTse07} coincide with our initial
findings reported earlier in \cite{JinAndWeb_ITA07}.
 There has also been a good deal of work on multi-channel wireless networks, but this body of
 work generally deals with scheduled networks as opposed to our treatment of
unscheduled networks (see \cite{multichannel06} and references
therein).  Perhaps most relevant is \cite{Bahl07}, in which
algorithms for \textit{dynamic} allocation of bandwidth-area
resources are proposed.

\section{Preliminaries}

\subsection{Network Model}
We consider a set of transmitting nodes at an arbitrary snapshot in
time with locations specified by a homogeneous Poisson point process
(PPP) of intensity $\lambda$ on the infinite 2-D plane. All nodes
are assumed to simultaneously transmit with power $\rho$.
By the stationarity of the PPP it is sufficient to analyze the
behavior of a single reference TX-RX pair (TX 0, RX 0), separated by
assumption by a distance $d$.   Note that the receivers are not a
part of the transmitter process. From the perspective of RX 0, the
interferers follow the distribution of the PPP conditioned on the
location of TX 0 (referred to as the Palm distribution). However, by
Slivnyak's theorem \cite{StoKen96} this distribution is the same as
the unconditional distribution and therefore the locations of the
interfering nodes form a homogeneous PPP of intensity $\lambda$.
Received power is modeled by path loss with exponent $\alpha
> 2$.  If $X_i$ denotes the distance of the $i$-th transmitting node to the
reference receiver and the transmit signal of the $i$-th transmitter
is denoted as $U_i$, the reference received signal is:
\begin{eqnarray*}
Y_0 = U_0  d^{-\alpha/2} + \sum_{i \in \Pi(\lambda)} U_i
X_i^{-\alpha/2} + Z_i
\end{eqnarray*}
where $Z_i$ is additive Gaussian noise with power $\eta$.
The resulting SINR therefore is:
\begin{eqnarray*}
\SINR_0 = \frac{ \rho d^{-\alpha}}{ \eta + \sum_{i \in \Pi(\lambda)}
\rho X_i^{-\alpha}},
\end{eqnarray*}
where $\Pi(\lambda)$ indicates the point process describing the
(random) interferer locations.  If Gaussian signaling is used, the
received mutual information (conditioned on interferer locations) is
$I(U_0;Y_0 | \Pi(\lambda)) = \log_2 ( 1 + \SINR_0)$ bits/symbol.  In
the fixed rate setting considered here, the probability the received
mutual information is smaller than the transmission rate is known to
be a good approximation to packet error rate if strong channel
coding is used \cite{Caire_Biglieri}, and thus is the primary metric
in this work.

A few comments in justification of our model are in order. Although
the model contains many simplifications to allow for tractability,
it contains many of the critical elements of a real
 decentralized network.  First, the spatial Poisson distribution means
that transmitting nodes are randomly and independently located; this
is reasonable in a network with indiscriminate node placement or
substantial mobility
assuming that no intelligent transmission scheduling is performed
Scheduling generally attempts to ensure that simultaneous
transmissions are sufficiently separated in space, and thus can
significantly change the spatial distribution of simultaneous
transmissions.  However, even simple scheduling protocols can incur
considerable overhead and latency and thus unscheduled systems (or
systems using ALOHA-like protocols that make transmission decisions
independent of interference conditions) as considered here are of
interest.  This is particularly true when scheduling overhead begins
to overtake the advantage of scheduling, as may be the case with
high mobility or very bursty traffic. The assumptions of fixed TX-RX
distances and no fading are often not reasonable, but as we discuss
in Section \ref{sec-fading} our results also apply to networks with
fading and/or variable distances in the interference-limited regime
(no thermal noise).  Furthermore, our results are reasonably
accurate in the presence of non-negligible thermal noise when the
fading and distance variation is not too large.  Finally, we note
that fixed- rather than variable-rate communication is appropriate
for some, but not necessarily all, settings, e.g., single-hop
communication with very stringent delay constraints.  In other
settings (e.g., when delay constraints are less stringent) variable
rate communication is more appropriate; optimizing bandwidth
partitioning in this context is of interest but is outside the
scope of this work.

\subsection{Outage Probability/Maximum Density Characterization} \label{sec-transcap}

An outage occurs whenever the SINR falls below threshold $\beta$, or
equivalently whenever the received mutual information is smaller
than $\log_2(1+ \beta)$. Therefore, the system-wide outage
probability is:
\begin{eqnarray*}
P_{\textrm{out}}(\lambda, \beta, \eta) \triangleq
 \mathbb{P} \left( \frac{ \rho d^{-\alpha}}{ \eta + \sum_{i \in
\Pi(\lambda)} \rho X_i^{-\alpha}} \leq \beta \right).
\end{eqnarray*}
This quantity is computed over the distribution of transmitter
positions and is an increasing function of the intensity $\lambda$.
The SINR threshold $\beta$ and the noise power $\eta$ are treated as
constants here, but are related to $R$, $W$, and $N$ in the
following section. Random variable $X$ is defined as the received
interference raised to the power $-\frac{2}{\alpha}$:
\begin{eqnarray*}
X \triangleq  \left( \sum_{i \in \Pi(\lambda)} X_i^{-\alpha}
\right)^{-\frac{2}{\alpha}},
\end{eqnarray*}
which allows the outage probability to be written in terms of $X$
as:
\begin{eqnarray*}
P_{\textrm{out}}(\lambda, \beta, \eta) &=&
 \mathbb{P} \left( \frac{ \rho d^{-\alpha}}{ \eta + \rho X^{-\frac{\alpha}{2}} } \leq \beta
 \right)
=  \mathbb{P} \left( X  \leq d^2 \left(\frac{1}{\beta} -
\frac{\eta}{\rho d^{-\alpha}} \right)^{-\frac{2}{\alpha}} \right).
\end{eqnarray*}
It is useful to write this expression in terms of a
\textit{normalized} interferer process.  If we define $Z$ as the
received interference for a process with intensity $\frac{1}{\pi}$:
\begin{eqnarray*}
Z \triangleq  \left( \sum_{i \in \Pi(1/\pi)} Z_i^{-\alpha}
\right)^{-\frac{2}{\alpha}},
\end{eqnarray*}
and note that a PPP with intensity $\lambda$ is equivalent to a PPP
with intensity $\frac{1}{\pi}$ scaled by $\frac{1}{\sqrt{\pi
\lambda}}$, it follows that $X$ and $\frac{1}{\pi \lambda} Z$ have
the same distribution.  Therefore
\begin{eqnarray}
\boxed{ P_{\textrm{out}}(\lambda, \beta, \eta) = F_Z \left( \lambda
\pi d^2 \left(\frac{1}{\beta} - \frac{\eta}{\rho d^{-\alpha}}
\right)^{-\frac{2}{\alpha}} \right) } \label{eq-outage_cdf}
\end{eqnarray}
where $F_Z(\cdot)$ denotes the CDF of random variable $Z$. Although
a closed form expression for $F_Z(\cdot)$ is not known except for
the special case of $\alpha=4$ \cite{SouSil90}, this
characterization of the outage probability allows us to derive an
exact solution to the bandwidth partition problem.

In many scenarios, the network is subject to an outage constraint
and the quantity of interest is the maximum intensity of
\textit{attempted} transmissions $\lambda_{\epsilon}$ such that the
outage probability (for a fixed $\beta$) is no larger than
$\epsilon$. Because outage probability increases monotonically with
density, an expression for $\lambda_{\epsilon}$ is reached by
inverting (\ref{eq-outage_cdf}):
\begin{eqnarray} \label{eq-transcap}
\boxed{ \lambda_{\epsilon} = \frac{F_Z^{-1} \left( \epsilon
\right)}{\pi d^2 } \left(\frac{1}{\beta} - \frac{\eta}{\rho
d^{-\alpha}} \right)^{\frac{2}{\alpha}} }
\end{eqnarray}
where $F_Z^{-1}(\cdot)$ is the inverse of $F_Z(\cdot)$.

Because the SINR is upper bounded by only considering the
contribution of the \textit{nearest} interferer, a necessary (but
not sufficient) condition for successful communication is that a
circle centered about the receiver of area $\pi d^2 \left(
\frac{1}{\beta} - \frac{\eta}{\rho d^{-\alpha}}
\right)^{-\frac{2}{\alpha}}$ be free of interferers \cite{WebYan05}.
On the other hand, the \textit{effective area} consumed by each
transmission when an outage level of $\epsilon$ is required is the
inverse of the density $\lambda_{\epsilon}$:
\begin{eqnarray}
\frac{1}{\lambda_{\epsilon}} = \frac{1}{F_Z^{-1} \left( \epsilon
\right)} \pi d^2 \left(\frac{1}{\beta} - \frac{\eta}{\rho
d^{-\alpha}} \right)^{-\frac{2}{\alpha}},
\end{eqnarray}
which is the interferer-free area from the necessary condition above
multiplied by the constant $\frac{1}{F_Z^{-1} \left( \epsilon
\right)}$.  This constant factor, which increases without bound as
$\epsilon \rightarrow 0$ and which is larger than one for all but
the largest values of $\epsilon$, accounts for the fact that
transmitters are randomly located and can be intuitively thought of
as a back-off parameter that ensures the outage constraint is met.
This interpretation turns out to be useful when interpreting
bandwidth partitioning in terms of bandwidth and area.

\section{Problem Formulation and General Solution} \label{sec-opt_freq}

We are now able to address the problem of interest, which is
determining the number of sub-bands that maximize the density of
transmissions such that the outage probability is no larger than
$\epsilon$.  As made explicit at the end of this section, finding
the value of $N$ that minimizes outage probability for a fixed total
density of transmitters is the dual of this problem and has
precisely the same solution.  For the reader's reference, the
relevant system parameters are summarized in Table
\ref{table-parameters}.

\begin{table}
\centering
\begin{tabular}{|l|l|}
\hline
Parameter & Description \\ \hline
$R$ & Transmission Rate (bits/sec) \\ \hline
$W$ & Total System Bandwidth (Hz) \\ \hline
$\rho$ & Transmission Power \\ \hline
$N_0$ & Noise Spectral Density \\ \hline
$d$ & Transmitter-Receiver Distance \\ \hline
$\Eb = \frac{\rho d^{-\alpha}}{N_0 R}$ & Energy per Information Bit \\ \hline
$\epsilon$ & Outage Constraint \\ \hline
$N$ & Number of Sub-Bands \\ \hline
$\beta$ & SINR Threshold \\ \hline
\end{tabular}
\caption{Summary of System Parameters}
\label{table-parameters}
\end{table}

If the system bandwidth is not split ($N=1$), each node utilizes the
entire bandwidth of $W$ Hz. The SINR required ($\beta$) to achieve a
rate of $R$ bps is determined by inverting the AWGN capacity
expression $R = W \log_2 (1 + \beta)$, which gives $\beta =
2^{\frac{R}{W}} - 1$. The maximum intensity of transmissions can be
determined by evaluating (\ref{eq-transcap}) with this value of
$\beta$ and $\eta = N_0 W$. If the system bandwidth is split
into $N>1$ orthogonal sub-bands each of width $\frac{W}{N}$, and
each transmitter-receiver pair uses one \textit{randomly selected}
sub-band, the required SINR $\beta(N)$ is determined by inverting
the rate expression:
\begin{eqnarray} \label{eq-beta}
R &=& \frac{W}{N} \log_2 (1 + \beta(N)) ~~~ \rightarrow ~~~
\beta(N) = 2^{\frac{N R}{W}} - 1.
\end{eqnarray}
Because each transmitter randomly chooses a sub-band, the users on
each sub-band are still a PPP and are independent across bands.  As
a result, the maximum intensity of transmissions \textit{per
sub-band} is $\lambda_{\epsilon}$ as defined in (\ref{eq-transcap})
with SINR threshold $\beta(N)$ and noise power $\eta = N_0
\frac{W}{N}$. Since the $N$ sub-bands are statistically identical,
the maximum total intensity of transmissions, denoted
$\lambda^T_{\epsilon}$, is the per sub-band intensity
$\lambda_{\epsilon}$ multiplied by $N$. Therefore, from
(\ref{eq-transcap}) we have:
\begin{eqnarray} \label{eq-tc_approx1}
\lambda_{\epsilon}^T(N) = N \left( \frac{F_Z^{-1} \left(\epsilon
\right)}{\pi d^2 } \right) \left( \frac{1}{\beta(N)} - \frac{N_0
\left(\frac{W}{N}\right)}{\rho d^{-\alpha}}
\right)^{\frac{2}{\alpha}}.
\end{eqnarray}
The optimal number of sub-bands $N^*$ is that which maximizes
total transmission density:
\begin{eqnarray} \label{eq-optN}
N^* = \textrm{arg} \max_N \lambda_{\epsilon}^T(N).
\end{eqnarray}
It is useful to interpret this optimization in terms of bandwidth
and area. Dividing (\ref{eq-tc_approx1}) by $W$ and then inverting
yields:
\begin{eqnarray}  \label{eq-tradeoff}
\frac{W}{\lambda_{\epsilon}^T(b)} = \frac{1}{F_Z^{-1} \left(
\epsilon \right)}
 \underbrace{\left(\frac{W}{N}\right)}_{\textrm{Bandwidth}}
\underbrace{\pi d^2 \left( \frac{1}{\beta(N)} - \frac{N_0 W}{N \rho
d^{-\alpha}} \right)^{-\frac{2}{\alpha}}}_{\textrm{Interferer-Free
Area}}.
\end{eqnarray}
which is the product of the constant $\frac{1}{F_Z^{-1} \left(
\epsilon \right)}$, sub-band bandwidth $\frac{W}{N}$, and the
required interferer-free area.  Total density is maximized by
minimizing this quantity, i.e., by minimizing the
\textit{bandwidth-area product} of each transmission. It is easily
checked that the interferer-free area is a strictly increasing
function of $N$. Thus, as the number of sub-bands $N$ is increased
the bandwidth consumed by each transmission decreases while the area
increases, leading to a non-trivial tradeoff.

Rather than solving the maximization in (\ref{eq-optN}) with respect
to $N$, it is more convenient to maximize with respect to the
\textit{operating spectral efficiency}, which is equal to the
transmission rate divided by the bandwidth of each sub-band:
\begin{equation}
b \triangleq \frac{R}{W/N} ~ \textrm{bps/Hz}.
\end{equation}
It is important to note that the operating spectral efficiency $b$
is a design parameter even though the per-transmission rate $R$ and
system bandwidth $W$ are fixed.\footnote{If only bandwidth
optimization is considered, $b$ should be limited to integer
multiples of $\frac{R}{W}$; in this case $N^*$ is either the integer
floor or ceiling of $b^* \left(\frac{R}{W}\right)$ due to the nature
of the objective function. However, if a more general scenario is
considered where the sub-band structure as well as the length of
transmission is being designed (e.g., in a packetized system), these
two parameters allow for operation at any desired $b$. Therefore,
arbitrary $b > 0$ are considered for the remainder of the paper.}

With this substitution the transmission density can be written as a
function of $b$:
\begin{eqnarray}
\lambda_{\epsilon}^T(b) = \left(
\frac{F_Z^{-1} \left( \epsilon\right)}{\pi d^2 } \right) 
\left(\frac{W}{R}\right)  b \left(
\frac{1}{2^b-1} - \frac{1}{b}\frac{N_0 R}{\rho d^{-\alpha}}
\right)^{\frac{2}{\alpha}}
\end{eqnarray}
Noting that the constant $\frac{\rho d^{-\alpha}}{N_0 R}  \triangleq
\Eb$ is the received energy per information bit \cite{verdu_wide}
and defining the constant $\kappa \triangleq
\left( \frac{F_Z^{-1} \left(\epsilon\right)}{\pi d^2} \right) \left(\frac{W}{R}\right) $, this can be further simplified
as:
\begin{eqnarray}
\lambda_{\epsilon}^T(b)
= \kappa b \left(\frac{1}{2^b-1} - \frac{1}{b \Eb}
\right)^{\frac{2}{\alpha}}.
\end{eqnarray}
The optimal spectral efficiency $b^*$ is therefore the solution to the
following optimization:
\begin{equation} \label{eq-master_b}
b^* =\textrm{arg} \max_{b>0} ~~b \left( \frac{1}{2^b-1} - \frac{1}{b \Eb} \right)^{\frac{2}{\alpha}}.
\end{equation}
Note that the optimal $b^*$ depends only on the path loss exponent
$\alpha$ and $\Eb$, and thus any dependence on power and rate is
completely captured by $\Eb$. By posing the problem in terms of
spectral efficiency, any direct dependence on $W$ is removed.
Furthermore, the problem is completely independent of the outage
constraint $\epsilon$.

The problem in (\ref{eq-master_b}) is only feasible for $b$ satisfying $\frac{1}{2^b-1} -
\frac{1}{b \Eb} \geq 0$, which corresponds to the SINR threshold $\beta = 2^b-1$ being
no larger than the interference-free SNR $\frac{N \rho d^{-\alpha}}{N_0 W}$.
Some simple manipulation shows that
this condition is equivalent to $b \leq \Ce \Ebp $, where $\Ce \Ebp$
is the maximum spectral efficiency of an AWGN channel and thus
is the solution to \cite[Equation 23]{verdu_wide}:
\begin{eqnarray} \label{eq-bmax}
2^{\Ce \Ebp} - 1 = \Eb \Ce
\Ebp.
\end{eqnarray}
The domain of the maximization is thus $0 \leq b \leq \Ce \Ebp$. If
$\Eb \leq \log_e 2 = -1.59$ dB the problem is infeasible for any $b$
because this corresponds to operating beyond interference-free
capacity\footnote{For readers less familiar with the power-limited
regime, note that fixing power $P$ and noise spectral density $N_0$
and using less bandwidth leads to a decreasing rate, i.e., the
function $w \log_2 \left(1 + \frac{P}{N_0 w} \right) \downarrow 0$
 as $w \rightarrow 0$.  Thus, there is a minimum bandwidth needed
to achieve a particular rate $R$ even in the absence of multi-user interference;
 this is the solution to
$R = w \log_2 \left(1 + \frac{P}{N_0 w} \right)$ and is precisely
the quantity $\frac{R}{\Ce \Ebp}$.  Furthermore, note that $w \log_2
\left(1 + \frac{P}{N_0 w} \right) \uparrow
\left(\frac{P}{N_0}\right) \log_e 2$
 as $w \rightarrow \infty$; therefore the minimum energy
per information bit $\Eb_{\textrm{min}} = \frac{P}{N_0 R} = \log_e 2
= -1.59$ dB and $\Ce (\log_e 2)=0$. }.

By taking the derivative of $\lambda_{\epsilon}^T(b)$ and setting it
equal to zero, the optimal spectral efficiency $b^*$ can be
characterized in terms of a fixed point equation parameterized by
$\alpha$ and $\Eb$:
\begin{theorem} \label{thm-main}
The optimum operating spectral efficiency $b^*$ is the \textit{unique} positive
solution of the following equation:
\begin{equation} \label{eq-main}
\boxed{ \Eb b \left(2^b-1 \right) - \Eb \frac{2}{\alpha} b^2 2^b \log_e 2 -
\left(1- \frac{2}{\alpha} \right) \left(2^b-1 \right)^2 = 0 }
\end{equation}
Furthermore, $b^*$ is an increasing function of $\Eb$ and of
$\alpha$.
\end{theorem}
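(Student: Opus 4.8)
The plan is to treat the maximization in (\ref{eq-master_b}) as a smooth one-dimensional problem on the feasible interval, reduce its first-order condition to (\ref{eq-main}), and then extract uniqueness and both comparative statics from a single monotonicity fact. First I would write the objective as $g(b)\defeq b\,h(b)^{2/\alpha}$ with $h(b)\defeq\frac{1}{2^b-1}-\frac{1}{b\Eb}$, which is strictly positive on the open feasible interval $(0,\Ce\Ebp)$ and vanishes at its right endpoint. Differentiating and factoring out the strictly positive quantity $h(b)^{2/\alpha-1}$ leaves the first-order condition $h(b)+\frac{2}{\alpha}b\,h'(b)=0$; multiplying through by the positive factor $\Eb\,b\,(2^b-1)^2$ converts this into exactly the left-hand side of (\ref{eq-main}). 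Hence interior critical points of $g$ and positive roots of (\ref{eq-main}) in $(0,\Ce\Ebp)$ coincide.

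Next I would establish existence of an interior maximizer. Since $\alpha>2$ we have $1-\frac{2}{\alpha}>0$, so $g(b)\sim\text{const}\cdot b^{1-2/\alpha}\to0$ as $b\to0^+$, while $g(\Ce\Ebp)=0$ because $h$ vanishes there. As $g$ is continuous and strictly positive in the interior, it attains its maximum at an interior critical point, i.e.\ at a root of (\ref{eq-main}).

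The crux, and the step I expect to be the main obstacle, is uniqueness. Rather than studying (\ref{eq-main}) in $b$ directly, I would solve it for $\Eb$: writing $F(b)$ for the left-hand side of (\ref{eq-main}), this gives $\Eb=\Phi(b)$ with $\Phi(b)\defeq\frac{(1-\frac{2}{\alpha})(2^b-1)^2}{b(2^b-1)-\frac{2}{\alpha}b^2 2^b\log_e 2}$. At any interior root of $F$ the numerator $(1-\frac{2}{\alpha})(2^b-1)^2$ is positive and $\Eb>0$, so the denominator of $\Phi$ is automatically positive there; thus every interior critical point satisfies $\Phi(b)=\Eb$ with $\Phi$ finite and positive, and it suffices to prove $\Phi$ strictly increasing. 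The clean route is the substitution $t\defeq b\log_e 2$: a short computation rewrites the reciprocal as $\frac{1-2/\alpha}{\Phi(b)}=\frac{1}{\log_e 2}\,\frac{t}{e^t-1}\bigl(1-\frac{2}{\alpha}\frac{t}{1-e^{-t}}\bigr)$. Here $\frac{t}{e^t-1}$ is positive and strictly decreasing and $\frac{t}{1-e^{-t}}$ is strictly increasing (both by signing one elementary derivative), so on the range where the second factor remains positive the expression is a product of two positive decreasing functions and hence decreasing; since $1-\frac{2}{\alpha}>0$, this makes $\Phi$ strictly increasing, mapping onto $(\log_e 2,\infty)$. Injectivity of $\Phi$ forces at most one interior critical point, which by the existence argument is exactly $b^*$; moreover $b^*=\Phi^{-1}(\Eb)$ is then immediately increasing in $\Eb$.

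Finally, for monotonicity in $\alpha$ I would differentiate the identity $F(b^*,\Eb,\alpha)=0$ implicitly. Differentiating $F(b,\Phi(b),\alpha)\equiv0$ in $b$ gives $\partial_b F=-\Phi'(b)\,\partial_\Eb F$, and since $\partial_\Eb F=b(2^b-1)-\frac{2}{\alpha}b^2 2^b\log_e 2$ is precisely the (positive) denominator of $\Phi$ and $\Phi'>0$, we get $\partial_b F<0$ at $b^*$. A direct computation gives $\partial_\alpha F=\frac{2}{\alpha^2}\bigl(\Eb b^2 2^b\log_e 2-(2^b-1)^2\bigr)$, and the first-order condition rearranges to $\Eb b^2 2^b\log_e 2-(2^b-1)^2=\frac{\alpha}{2}\bigl(\Eb b(2^b-1)-(2^b-1)^2\bigr)$; the feasibility inequality $\Eb b>2^b-1$ (equivalent to $h(b)>0$) makes the right-hand side positive, so $\partial_\alpha F>0$. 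The implicit function theorem then yields $\frac{\partial b^*}{\partial\alpha}=-\partial_\alpha F/\partial_b F>0$, completing the argument.
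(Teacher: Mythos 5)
Your proposal is correct, and it reaches the theorem by a genuinely different route than the paper on the two substantive points, namely uniqueness and the comparative statics. The reduction of the first-order condition to (\ref{eq-main}) and the existence argument (the objective vanishes at $b=0$ and at $b=\Ce\Ebp$ and is positive in between) are the same as in Appendix \ref{app-proof-main}. For uniqueness, however, the paper works with the two sides of the fixed-point equation in natural-log variables, $\mu(n) = \Ebt \delta n^2 \ex^n + (1-\delta)\left(\ex^n-1\right)^2$ and $\nu(n) = \Ebt n \left(\ex^n-1\right)$, observes that both are positive, nondecreasing, strictly convex and vanish at $n=0$, and asserts that this makes at most one positive crossing "straightforward to argue." Your inversion $\Eb=\Phi(b)$, with strict monotonicity of $\Phi$ obtained by factoring its reciprocal into the elementary monotone pieces $\frac{t}{\ex^t-1}$ and $1-\frac{2}{\alpha}\frac{t}{1-\ex^{-t}}$, replaces that step entirely, and it is arguably the sharper argument: as a general principle, two strictly convex increasing functions vanishing at the origin \emph{can} cross twice on $(0,\infty)$, so the paper's convexity claim needs the specific structure of $\mu,\nu$ to be completed, whereas your monotonicity of $\Phi$ is airtight and moreover delivers $b^*=\Phi^{-1}\left(\Eb\right)$ increasing in $\Eb$ for free (the paper proves that separately, by a sign argument on $f(n,\Ebt,\delta)$). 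Your observation that any positive root of (\ref{eq-main}) forces the denominator $b(2^b-1)-\frac{2}{\alpha}b^2 2^b\log_e 2$ to be positive also makes the uniqueness claim hold over all $b>0$ rather than just the feasible interval, which matches the theorem as stated. For monotonicity in $\alpha$, your implicit-function-theorem computation --- with $\partial_b F<0$ extracted cleverly from the identity $F(b,\Phi(b),\alpha)\equiv 0$, and $\partial_\alpha F>0$ from the first-order-condition rearrangement together with the feasibility inequality $\Eb b > 2^b-1$ --- is essentially equivalent in content to the paper's proof that $f$ decreases in $\delta$ (the paper instead uses the auxiliary inequality $\left(\ex^n-1\right)\left(n\ex^n-\ex^n+1\right)\geq 0$ and the sign pattern of $f$ around its root, avoiding any need for $\partial_b F \neq 0$); both are valid, and yours is a clean repackaging.
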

\vspace{2mm}
\begin{proof}
See Appendix \ref{app-proof-main}.
\end{proof}

Although we are not able to find a general closed-form expression
for (\ref{eq-main}), this expression is easily solved numerically
and we can find closed form solutions in the asymptotic regimes
($\Eb \rightarrow \infty$ and $\Eb \rightarrow -1.59$ dB). In Fig.
\ref{fig-alpha4} the numerically computed optimum spectral
efficiency $b^*$ and the corresponding density constant
$\frac{\lambda_{\epsilon}^T \left( b^* \right)}{\kappa}$ 
are plotted versus $\Eb$ for $\alpha=4$, along with the
spectral efficiency of an interference-free AWGN channel $\Ce \Ebp$.
From this figure, two asymptotic regimes of interest can be
identified:

\begin{figure}
\centering
\includegraphics[width=4in]{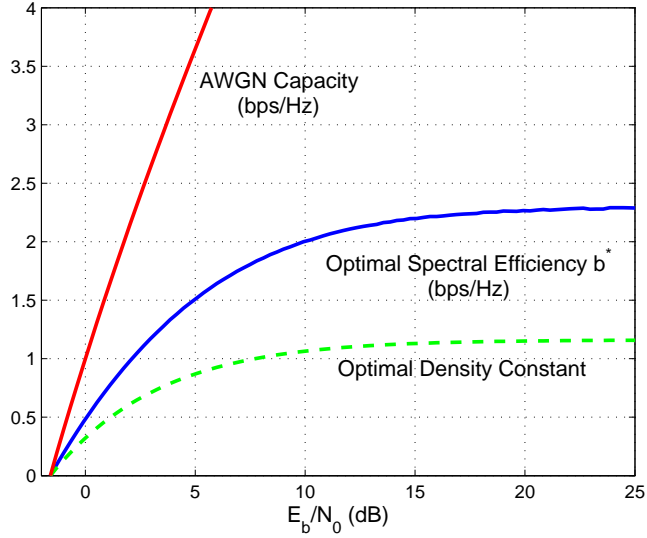}
\caption{Optimal Spectral Efficiency $b^*$ and Optimal Density
Constant $\frac{\lambda_{\epsilon}^T \left( b^* \right)}{\kappa}$
vs. $\Eb$ for $\alpha=4$.} \label{fig-alpha4}
\end{figure}

\begin{itemize}
\item {\bf Interference-Limited Networks:} When $\Eb$ is sufficiently
large, the effect of thermal noise vanishes and performance depends only
on multi-user interference.  As a result, the optimal $b^*$ and 
density $\lambda_{\epsilon}^T \left( b^* \right)$ both converge to
constants as $\Eb \rightarrow \infty$.

\item {\bf Power-Limited Networks:} When $\Eb$ is close to its minimum value of
$-1.59$ dB, $b^*$ and $\lambda_{\epsilon}^T(b^*)$ scale linearly
with $\Eb$ (dB) and show characteristics very similar to AWGN
spectral efficiency \cite{verdu_wide}.
\end{itemize}

In Section \ref{sec-int_limited} the interference-limited regime is
explored and a closed form expression for the optimal value of $b^*$
in terms of only the path-loss exponent is derived.  Once a system
is in this regime, performance is virtually unaffected by further
increasing transmission power.
In Section \ref{sec-wideband} the power-limited regime is explored
and simple expressions for $b^*$ and $\lambda_{\epsilon}^T(b^*)$ in
terms of $\alpha$ and $\Eb$ are given that are accurate for $\Eb$
near $-1.59$ dB. Although intuition might suggest that noise is
dominant and thus interference is negligible in this regime, this is
not the case as evidenced by the fact that the optimum spectral
efficiency $b^*$ is considerably smaller than the interference-free
spectral efficiency $\Ce \Ebp$.  Furthermore, increasing
transmission power does significantly increase density in this
regime. Between these two regimes (approximately from 2-3 dB to
15-20 dB), $b^*$ increases sub-linearly with $\Eb$ (dB) and the
intuition is a combination of the insights derived for the
interference- and power-limited regimes.

In Fig. \ref{fig-opt_spec3} numerically computed values of $b^*$ are
plotted versus $\Eb$ for $\alpha=2.5,~3,~3.5$ and $4$, and the
interference-limited regime is seen to begin around $15$ dB for each
value of $\alpha$.  Although not visible here, it is interesting to
note that $\frac{\lambda_{\epsilon}^T(b^*)}{\kappa}$ is not monotonic
with respect to $\alpha$; on the other hand,  it is easily verified that
$\lambda_{\epsilon}^T(b^*)$ monotonically increases with $\Eb$.

\begin{figure}
\centering
\includegraphics[width=4in]{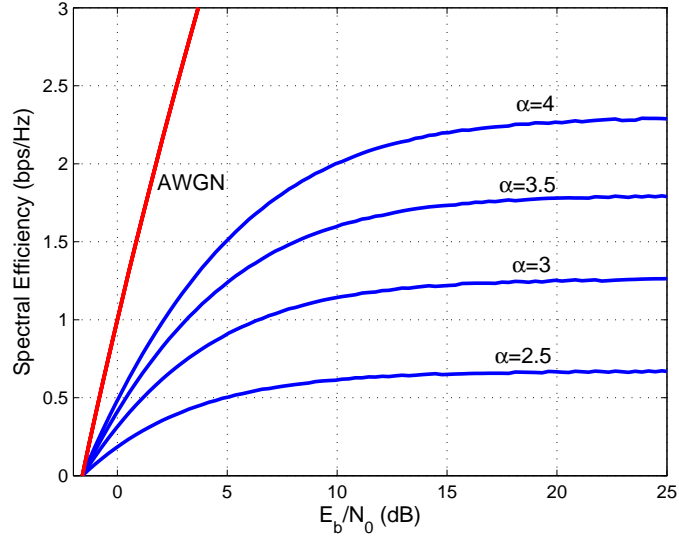}
\caption{Optimal Spectral Efficiency $b^*$ vs. $\Eb$ for
$\alpha=2.5, 3, 3.5, 4$.} \label{fig-opt_spec3}
\end{figure}

\begin{remark} \label{remark-outage}
The dual problem of density maximization subject to an outage
constraint is outage minimization for a given density. In this case
the overall outage probability is the same as the outage probability
on each of the $N$ sub-bands, each of which has density
$\frac{\lambda}{N}$. Substituting appropriate values for the SINR
threshold and the noise power in (\ref{eq-outage_cdf}) yields:
\begin{eqnarray}
 P_{\textrm{out}}(b) = F_Z \left( \lambda \pi d^2
 \left(\frac{R}{W}\right) \left( \frac{1}{b} \right)
 \left(\frac{1}{2^b-1} - \frac{1}{b \Eb} \right)^{-\frac{2}{\alpha}}
 \right).
\end{eqnarray}
Outage probability is minimized by minimizing the argument of the
CDF (due to the non-decreasing nature of any CDF).  Because the
argument is inversely proportional to the argument of the
maximization in (\ref{eq-master_b}), the problems of outage
minimization and density maximization are equivalent.  To understand
the impact of partitioning, it is useful to note that $F_Z(z)$ is
approximately linear for small $z$ \cite{WebYan05}.
 \hfill $\lozenge$
\end{remark}

\begin{remark}
If the available transmission rates are at a \emph{gap} to capacity,
i.e., $R = W \log_2 (1 + \Gamma^{-1} \cdot \SINR)$ for some $\Gamma
> 1$, the required SINR increases by a factor of $\Gamma$ to
$\beta(N) = \Gamma \left(2^{NR/W} - 1 \right)$ and the density is
given by $\lambda_{\epsilon}^T(b) = \Gamma^{-\frac{2}{\alpha}}
\kappa b \left(\frac{1}{2^b-1} - \frac{1}{b \Eb}
\right)^{\frac{2}{\alpha}}$ where $\Eb = \frac{\rho
d^{-\alpha}}{\Gamma N_0 R}$.  Thus, the optimal spectral efficiency
is given by evaluating Theorem \ref{thm-main} with $\Eb = \frac{\rho
d^{-\alpha}}{\Gamma N_0 R}$.
\hfill $\lozenge$
\end{remark}


\section{Partitioning for Interference-Limited Networks} \label{sec-int_limited}

In systems with sufficiently powered devices (i.e., large $\Eb$),
thermal noise is essentially negligible.  In the limiting case where
$N_0 = 0$ (i.e., $\Eb \rightarrow \infty$) the density is given by:
\begin{equation} \label{eq-int_limited_b}
\lambda_{\epsilon}^T(b) = \kappa b \left(2^b-1 \right)^{-
\frac{2}{\alpha}}.
\end{equation}
In this limiting regime, a closed-form solution for $b^*$ can be reached.
\begin{theorem} \label{thm-int-limited}
The optimum operating spectral efficiency $b^*$ in the absence of
thermal noise ($N_0 = 0 \leftrightarrow \Eb = \infty$) is the unique
solution to:
\begin{eqnarray} \label{eq-optspec}
b^* = (\log_2 e) \frac{\alpha}{2} (1 - 2^{-b^*}),
\end{eqnarray}
which can be written in closed form as:
\begin{equation} \label{eq-optspec2}
\boxed{ b^* = \log_2 e \left[ \frac{\alpha}{2} + \mathcal{W} \left( -
\frac{\alpha}{2} e^{-\frac{\alpha}{2}} \right) \right] }
\end{equation}
where $\mathcal{W}(z)$ is the principle branch of the Lambert $\mathcal{W}$ function
and thus solves $\mathcal{W}(z) e^{ \mathcal{W}(z)} = z$.
\end{theorem}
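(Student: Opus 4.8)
The plan is to maximize the interference-limited density~(\ref{eq-int_limited_b}) directly through its first-order condition, to verify that the resulting stationary point is the unique positive maximizer, and then to invert the stationarity equation using the Lambert $\mathcal{W}$ function. Since $\kappa > 0$ is a positive constant, maximizing $\lambda_{\epsilon}^T(b) = \kappa b (2^b-1)^{-2/\alpha}$ is equivalent to maximizing its logarithm $\log \kappa + \log b - \frac{2}{\alpha}\log(2^b-1)$. Differentiating and setting the derivative to zero gives
\[ \frac{1}{b} - \frac{2}{\alpha}\,\frac{2^{b}\log_e 2}{2^{b}-1} = 0, \]
which, after multiplying through by $b(2^b-1)$ and dividing by $2^b$, rearranges to $1 - 2^{-b} = \frac{2}{\alpha}(\log_e 2)\,b$, i.e.\ exactly~(\ref{eq-optspec}). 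Equivalently one obtains the same equation by dividing~(\ref{eq-main}) by $\Eb$ and letting $\Eb \to \infty$, which annihilates the final term and leaves $b\bigl[(2^b-1) - \frac{2}{\alpha}b\,2^b\log_e 2\bigr] = 0$.

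Next I would show that~(\ref{eq-optspec}) has a unique positive root and that it is the global maximizer. Define $h(b) \defeq (\log_2 e)\frac{\alpha}{2}(1-2^{-b}) - b$, whose positive roots are precisely the solutions of~(\ref{eq-optspec}). Then $h(0)=0$ and $h'(b) = \frac{\alpha}{2}2^{-b} - 1$, so $h'(0) = \frac{\alpha}{2}-1 > 0$ because $\alpha > 2$; thus $h$ is initially increasing and positive. Since $h'$ vanishes only at $b = \log_2(\alpha/2)$ and is negative thereafter, $h$ rises to a single interior maximum and then decreases monotonically to $-\infty$, so it crosses zero exactly once for $b>0$. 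That this root is the maximizer follows from the boundary behavior of $\lambda_{\epsilon}^T$: using $2^b-1 \sim b\log_e 2$ as $b\to 0^+$ gives $\lambda_{\epsilon}^T(b) \sim \kappa (\log_e 2)^{-2/\alpha} b^{\,1-2/\alpha} \to 0$ (here $\alpha>2$ is again essential, so that $1-2/\alpha>0$), while $\lambda_{\epsilon}^T(b)\to 0$ as $b\to\infty$; being positive and smooth on $(0,\infty)$ with a single stationary point, $\lambda_{\epsilon}^T$ attains its maximum there.

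For the closed form, I would substitute $v \defeq (\log_e 2)\,b$ to rewrite~(\ref{eq-optspec}) as $v = \frac{\alpha}{2}(1 - e^{-v})$, then set $t \defeq v - \frac{\alpha}{2}$ so that $e^{-v} = e^{-t}e^{-\alpha/2}$ and the equation collapses to $t = -\frac{\alpha}{2}e^{-\alpha/2}e^{-t}$, i.e.\ $t\,e^{t} = -\frac{\alpha}{2}e^{-\alpha/2}$. By the defining relation $\mathcal{W}(z)e^{\mathcal{W}(z)} = z$ this yields $t = \mathcal{W}\bigl(-\frac{\alpha}{2}e^{-\alpha/2}\bigr)$, and back-substituting $b = v/\log_e 2 = (\log_2 e)\bigl(\frac{\alpha}{2} + t\bigr)$ gives~(\ref{eq-optspec2}).

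The main obstacle, and the step deserving the most care, is the selection of the correct branch of $\mathcal{W}$. For $\alpha > 2$ the argument $-\frac{\alpha}{2}e^{-\alpha/2}$ lies in $(-e^{-1},0)$ (since $xe^{-x} < e^{-1}$ for $x = \alpha/2 > 1$), a range in which $t\,e^{t} = -\frac{\alpha}{2}e^{-\alpha/2}$ has two real roots. One root is $t = -\frac{\alpha}{2} < -1$, which corresponds to $v=0$, i.e.\ the spurious solution $b=0$, and sits on the non-principal branch $\mathcal{W}_{-1}$. The desired positive maximizer corresponds to the root $t \in (-1,0)$ supplied by the principal branch $\mathcal{W}$, which indeed gives $v = t + \frac{\alpha}{2} \in \bigl(\frac{\alpha}{2}-1,\frac{\alpha}{2}\bigr) \subset (0,\infty)$ and hence $b^* > 0$. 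This confirms that the principal branch in~(\ref{eq-optspec2}) is the right choice, completing the argument.
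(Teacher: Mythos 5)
Your proposal is correct, and it carries out the route the paper mentions but does not execute: direct maximization of (\ref{eq-int_limited_b}) via the logarithmic derivative. The paper's own proof instead obtains (\ref{eq-optspec}) by taking the fixed point equation of Theorem \ref{thm-main} and keeping only the $\Eb$ terms --- the same reduction you note parenthetically as an alternative --- and then performs essentially your Lambert-$\mathcal{W}$ algebra (the paper's manipulation of $\left(b^* \log_e 2 - \frac{\alpha}{2}\right) e^{b^* \log_e 2} = -\frac{\alpha}{2}$, followed by multiplication by $e^{-\frac{\alpha}{2}}$, is your substitution $t = v - \frac{\alpha}{2}$ in different notation). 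What your write-up adds is rigor at the two places where the paper is silent. First, uniqueness and maximality: the paper implicitly inherits these from Theorem \ref{thm-main}, whose uniqueness argument is written for finite $\Eb$, so the limiting case $\Eb = \infty$ strictly deserves its own check; your analysis of $h(b) = (\log_2 e)\frac{\alpha}{2}(1-2^{-b}) - b$ (namely $h(0)=0$, $h'(0) = \frac{\alpha}{2}-1 > 0$, and $h'$ changing sign exactly once at $b = \log_2(\alpha/2)$), combined with the boundary behavior $\lambda_{\epsilon}^T(b) \to 0$ as $b \to 0^+$ (using $\alpha > 2$) and as $b \to \infty$, supplies exactly that. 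Second, branch selection: the paper simply asserts the principal branch, whereas you verify that the argument $-\frac{\alpha}{2}e^{-\alpha/2}$ lies in $(-e^{-1},0)$, identify the $\mathcal{W}_{-1}$ root $t = -\frac{\alpha}{2}$ as precisely the spurious solution $b=0$, and confirm that the principal-branch root $t \in (-1,0)$ yields the positive maximizer $b^* > 0$. Both additions strengthen the published argument; nothing in your proposal is incorrect.
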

\vspace{2mm}
\begin{proof}
The result can be shown by directly maximizing (\ref{eq-int_limited_b})
or by solving the fixed point equation given in Theorem \ref{thm-main}
while keeping only the $\Eb$ terms.
The latter approach yields:
\begin{equation*}
\frac{2}{\alpha} b^2 2^b \log_e 2 - b \left(2^b-1 \right) = 0,
\end{equation*}
which is easily manipulated into the form of (\ref{eq-optspec}).  To
get (\ref{eq-optspec2}) we manipulate (\ref{eq-optspec}) into the
form $\left(b^* \log_e 2 - \frac{\alpha}{2} \right) e^{b^* \log_e 2} = -\frac{\alpha}{2}$.
Multiplying both sides by $e^{-\frac{\alpha}{2}}$ yields
$\left(b^* \log_e 2 - \frac{\alpha}{2} \right) e^{b^* \log_e 2 - \frac{\alpha}{2}}
= -\frac{\alpha}{2} e^{-\frac{\alpha}{2}}$, from which we have
$\mathcal{W} \left( -\frac{\alpha}{2} e^{-\frac{\alpha}{2}} \right) =
b^* \log_e 2 - \frac{\alpha}{2}$ and thus the result.
\end{proof}

The optimum depends only on the path loss exponent $\alpha$, and it
is straightforward to show that $b^*$ is an increasing function of
$\alpha$, $b^*$ is upper bounded by $\frac{\alpha}{2} \log_2 e$, and
that $b^*/(\frac{\alpha}{2} \log_2 e)$ converges to $1$ as $\alpha \rightarrow \infty$. 
In Fig. \ref{fig-opt_spec} the optimal spectral
efficiency $b^*$ and $\frac{\lambda_{\epsilon}^T(b^*)}{\kappa}$ are plotted versus path-loss exponent $\alpha$.
The optimal spectral efficiency is very small for $\alpha$ close to
2 but then increases nearly linearly with $\alpha$; for example, the
optimal spectral efficiency for $\alpha=3$ is $1.26$ bps/Hz
($\beta=1.45$ dB).  Note the non-monotonic behavior of 
$\frac{\lambda_{\epsilon}^T(b^*)}{\kappa}$ with $\alpha$: the minimum
occurs at $\alpha=2.77$, where
$\frac{\lambda_{\epsilon}^T(b^*)}{\kappa} = b^* = 1$.


\begin{figure}
\centering
\includegraphics[width=4in]{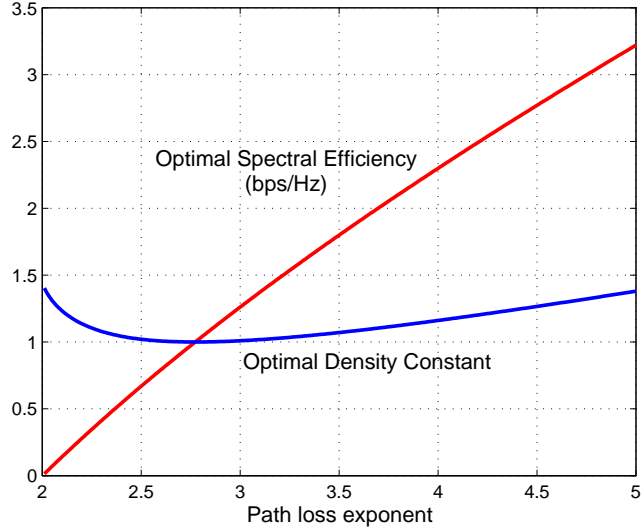}
\caption{Optimal Spectral Efficiency $b^*$ and Optimal Density
Constant $\frac{\lambda_{\epsilon}^T \left( b^* \right)}{\kappa}$
vs. Path Loss Exponent $\alpha$ for Interference-Limited Networks.}
\label{fig-opt_spec}
\end{figure}

To gain an intuitive understanding of the optimal solution, let us
first consider the behavior of $\lambda_{\epsilon}^T(b)$ when $b$ is
small, i.e. $b \ll 1$.  Because $e^x - 1 \approx x$ for small $x$,
the SINR threshold increasing approximately \textit{linearly} with
$b$: $\beta = 2^b-1 \approx b \log_e 2$. Plugging into
(\ref{eq-int_limited_b}) yields:
\begin{eqnarray*}
\lambda_{\epsilon}^T(b) = \kappa b \left(2^b-1 \right)^{-
\frac{2}{\alpha}} \approx \kappa b \cdot b^{- \frac{2}{\alpha}} =
\kappa b^{\left(1 - \frac{2}{\alpha}\right)}.
\end{eqnarray*}
For any path-loss exponent $\alpha > 2$, the density of
transmissions increases as $b^{\left(1-\frac{2}{\alpha} \right)}$.
Therefore, \textit{increasing the number of sub-bands $N$, or
equivalently increasing the spectral efficiency $b$, leads to an
increased transmission capacity}, as long as the linear
approximation to $\beta$ remains valid.  Recall that the area
consumed by each transmission is proportional to
$\beta^{\frac{2}{\alpha}}$ (equation \ref{eq-tradeoff}): if $\beta
\sim b$, then area increases sub-linearly as $b^{\frac{2}{\alpha}}$
and this increase is offset by the linear increase in the number of
parallel transmissions. When $b$ becomes larger, $\beta$ begins to
grow \textit{exponentially} rather than linearly with $b$ (i.e.,
SINR must be doubled \textit{in dB units} rather than in linear
units in order to double spectral efficiency) and thus the benefit
of further increasing the number of sub-bands is far outweighed by
the SINR/area increase.


This behavior is illustrated in Fig. \ref{fig-density_vs_spec},
where $\lambda_{\epsilon}^T(b) = \kappa b \left(2^b-1 \right)^{-
\frac{2}{\alpha}}$ (with $\kappa=1$) is plotted versus $b$ for
different values of $\alpha$. The function increases rapidly when
$b$ is small, but then decreases rapidly beyond its peak when the
SINR cost becomes prohibitive.  A larger path loss exponent makes
the system less sensitive to interference, and thus the peak is
attained at a larger value of $b$.  It is interesting to note that
all of the curves intersect at $b=1$ because
$\lambda_{\epsilon}^T(1) = \kappa$ for any value of $\alpha$.
Although $b=1$ is quite sub-optimal when $\alpha$ is near $2$,
$\kappa$ is reasonably close to the optimal $\kappa
b^*(2^{b^*}-1)^{- \frac{2}{\alpha}}$ for exponents between $2.5$ and
$5$ and thus is a rather robust operating point if the path loss
exponent is not known exactly.

\begin{figure}
\centering
\includegraphics[width=4in]{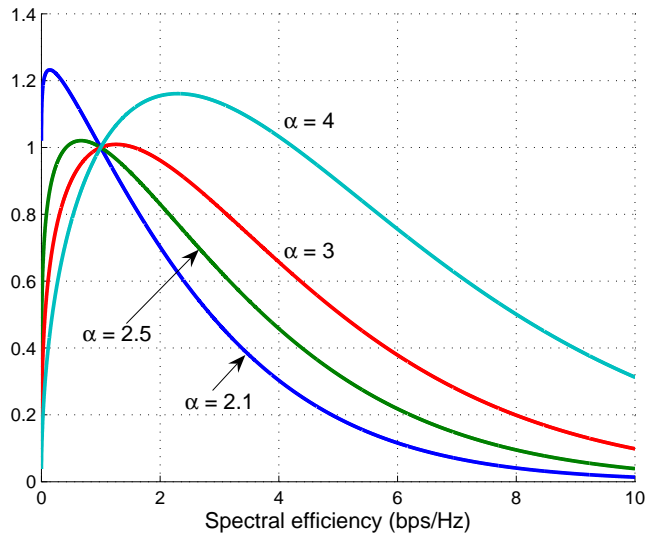}
\caption{Density Constant $\frac{\lambda_{\epsilon}^T (b)}{\kappa}$
vs. Spectral Efficiency $b$ for Interference-Limited Networks,
$\alpha = 2.1, ~2.5, 3~, 4$.} \label{fig-density_vs_spec}
\end{figure}

\textbf{A Design Example.}  Consider wireless LAN parameters that
are conceptually similar to those of an 2.4 GHz 802.11 system, that
uses $N=3$ bands of about $20$ MHz. Assume the usable bandwidth is a
total of $W = 60$ MHz, and that the desired rate is $R = 10$ Mbps
and $\alpha = 3$. From Theorem \ref{thm-int-limited} we can
determine that
\begin{equation}
N^* = \frac{b^*}{R/W} = \frac{1.26}{R/W} = 7.56,
\end{equation}
so the optimum partition is about $N^*=8$, or bands of 7.5 MHz. If
however the data rate requirement is higher, like $60$ Mbps, then it
can quickly be confirmed that $N^* = 1$.  That is, the maximum
number of users can be accommodated at the higher data rate if each
of them uses the entire band, since they
can accept a lower received SINR with such a large bandwidth. \\

\section{Partitioning for Power-Limited Networks} \label{sec-wideband}

In the power-limited regime where $\Eb$ is close to $-1.59$ dB, we
can obtain a simple characterization of $b^*$ that is accurate up to
a quadratic term by solving the fixed point equation given in
Theorem \ref{thm-main}:
\begin{theorem} \label{thm-wideband}
The optimum operating spectral efficiency $b^*$ in the power-limited
regime ($\Eb$ slightly larger than $-1.59$ dB) is given by:
\begin{equation} \label{eq-optspec_wide}
\boxed{ b^* = \left(1 - \frac{2}{\alpha} \right) \Ce \Ebp
+ O \left( b^2 \right)}
\end{equation}
where $ \Ce \Ebp$ is the AWGN spectral efficiency at $\Eb$ as defined in (\ref{eq-bmax}).

Furthermore, the density in the wideband regime is characterized as:
\begin{equation} \label{eq-lambda_wide}
\frac{\lambda_{\epsilon}^T(b^*)}{\kappa} = \left(
(1-\delta)^{(1-\delta)} \delta^{\delta} 2^{-\delta} \right) \Ce \Ebp
+ O \left( b^2 \right)
\end{equation}
where $\delta \triangleq \frac{2}{\alpha}$ and
$(1-\delta)^{(1-\delta)} \delta^{\delta} 2^{-\delta} < 1$ for all $\alpha > 2$.
\end{theorem}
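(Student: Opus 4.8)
The plan is to work directly from the fixed point equation of Theorem~\ref{thm-main} and Taylor expand it in the power-limited regime. The starting observation is that as $\Eb$ approaches its minimum $\log_e 2$, the feasible domain $0 \le b \le \Ce\Ebp$ collapses to a point, so both $b^*$ and $\Ce\Ebp$ tend to $0$. I would therefore treat $b$ and the excess energy $\Eb - \log_e 2$ as infinitesimals of the same order and expand the left-hand side of~(\ref{eq-main}) in powers of $b$ using $2^b - 1 = b\log_e 2 + \frac{1}{2}b^2(\log_e 2)^2 + O(b^3)$ together with the matching expansion of $2^b$.

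The crucial step is a cancellation. Collecting the pure order-$b^2$ terms (those carrying $(\log_e 2)^2$ and independent of $\Eb - \log_e 2$), the coefficient is $(\log_e 2)^2\bigl[1 - \delta - (1-\delta)\bigr] = 0$ with $\delta = \frac{2}{\alpha}$. Because this coefficient vanishes identically, the genuine leading balance is between the cross term of order $(\Eb - \log_e 2)\,b^2$, whose coefficient I expect to reduce to $(1-\delta)(\log_e 2)(\Eb - \log_e 2)$, and the order-$b^3$ term, whose coefficient reduces to $-\frac{1}{2}(\log_e 2)^3$ after using $\Eb \approx \log_e 2$. Dividing by $b^2$ and solving the resulting linear relation gives $b^* = \frac{2(1-\delta)(\Eb - \log_e 2)}{(\log_e 2)^2} + O(b^2)$. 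To convert this into~(\ref{eq-optspec_wide}) I would expand the defining relation~(\ref{eq-bmax}), $2^{\Ce\Ebp} - 1 = \Eb\,\Ce\Ebp$, for small $\Ce\Ebp$ to get $\Eb - \log_e 2 = \frac{1}{2}(\log_e 2)^2\,\Ce\Ebp + O\bigl((\Ce\Ebp)^2\bigr)$; substituting this eliminates $\Eb - \log_e 2$ and yields $b^* = (1-\delta)\Ce\Ebp + O(b^2)$, as claimed.

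For the density~(\ref{eq-lambda_wide}) I would substitute $b = b^*$ into $\frac{\lambda_{\epsilon}^T(b)}{\kappa} = b\bigl(\frac{1}{2^b-1} - \frac{1}{b\Eb}\bigr)^{\delta}$ and expand the bracketed factor. Using $\frac{1}{2^b-1} = \frac{1}{b\log_e 2} - \frac{1}{2} + O(b)$ and $\frac{1}{b\Eb} = \frac{1}{b\log_e 2} - \frac{\Eb - \log_e 2}{b(\log_e 2)^2} + O(b)$, the $\frac{1}{b\log_e 2}$ singularities cancel and the bracket tends to $-\frac{1}{2} + \frac{\Eb - \log_e 2}{b^*(\log_e 2)^2}$. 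Inserting the relation $\Eb - \log_e 2 = \frac{b^*(\log_e 2)^2}{2(1-\delta)}$ obtained above collapses the bracket to $\frac{\delta}{2(1-\delta)}$ at leading order, so $\frac{\lambda_{\epsilon}^T(b^*)}{\kappa} = b^*\bigl(\frac{\delta}{2(1-\delta)}\bigr)^{\delta} + O(b^2)$; substituting $b^* = (1-\delta)\Ce\Ebp$ and simplifying gives the stated coefficient $(1-\delta)^{1-\delta}\delta^{\delta}2^{-\delta}$.

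Finally, to prove $(1-\delta)^{1-\delta}\delta^{\delta}2^{-\delta} < 1$ for all $\alpha > 2$ (equivalently $0 < \delta < 1$), I would take logarithms: $\log\bigl[(1-\delta)^{1-\delta}\delta^{\delta}2^{-\delta}\bigr] = (1-\delta)\log(1-\delta) + \delta\log\delta - \delta\log 2$. The first two terms are exactly the negative of the natural-log binary entropy, hence $\le 0$, and the last term is strictly negative for $\delta > 0$, so the logarithm is strictly negative and the coefficient lies below $1$. I expect the main obstacle to be the order-bookkeeping in the second paragraph: because the naive order-$b^2$ coefficient cancels, one must retain the $(\Eb - \log_e 2)b^2$ cross term and the order-$b^3$ term simultaneously and verify that every other contribution is genuinely higher order, which is what pins down the constant $(1-\delta)$ and justifies the $O(b^2)$ remainder.
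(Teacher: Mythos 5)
Your proposal is correct and follows essentially the same route as the paper: both proofs Taylor-expand the fixed-point equation of Theorem \ref{thm-main} around zero spectral efficiency, exploit the cancellation of the pure quadratic terms so that the leading balance pins down the constant $1-\delta$, convert to $\Ce \Ebp$ via the expansion of (\ref{eq-bmax}), and then substitute back to extract the density constant $\left(\frac{\delta}{2(1-\delta)}\right)^{\delta}(1-\delta) = (1-\delta)^{1-\delta}\delta^{\delta}2^{-\delta}$ --- the paper merely organizes the asymptotics in the variables $n = b\log_e 2$, $\Ebt = \Eb/\log_e 2$ and evaluates the limits with L'Hospital's rule instead of your explicit two-small-parameter bookkeeping in $b$ and $\Eb - \log_e 2$. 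Your closing entropy argument for $(1-\delta)^{1-\delta}\delta^{\delta}2^{-\delta} < 1$ is a small correct addition that the paper states without proof.
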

\begin{proof}
See Appendix \ref{app-wideband}.
\end{proof}

Fig. \ref{fig-opt_spec_wideband} contains plots of the numerically
computed $b^*$, the approximation $(1-\frac{2}{\alpha}) \Ce \Ebp$,
and $\Ce \Ebp$ versus $\Eb$ for $\alpha=3$ and $\alpha=4$. Fig.
\ref{fig-opt_ase_wideband} contains plots of the numerically
computed $\frac{\lambda_{\epsilon}^T(b^*)}{\kappa}$, the
approximation from (\ref{eq-lambda_wide}), and $\Ce \Ebp$ versus
$\Eb$ for $\alpha=2.01$ and $\alpha=3$ (the curve for $\alpha=4$ is
nearly indistinguishable from $\alpha=3$). Both approximations are
seen to be very accurate.

\begin{figure}
\centering
\includegraphics[width=4in]{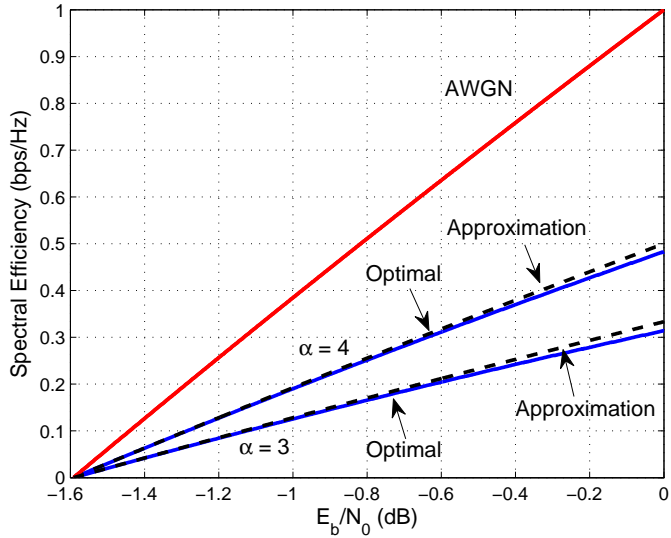}
\caption{Optimal Spectral Efficiency $b^*$ vs. $\Eb$ for
Power-Limited Networks, $\alpha=3,~4$.}
\label{fig-opt_spec_wideband}
\end{figure}

\begin{figure}
\centering
\includegraphics[width=4in]{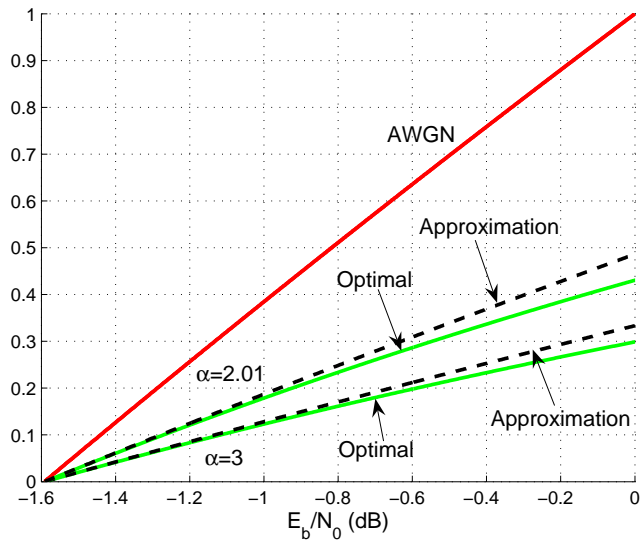}
\caption{Optimal Density Constant $\frac{\lambda_{\epsilon}^T
(b^*)}{\kappa}$ vs. $\Eb$ for Power-Limited Networks,
$\alpha=2.01,~3$.} \label{fig-opt_ase_wideband}
\end{figure}

Although intuition might suggest that interference can be ignored
when thermal noise is so large, this is not the case. If $b$ is
chosen only slightly smaller than $\Ce \Ebp$, the SINR threshold is
almost equal to the interference-free SNR and thus each receiver is
extremely sensitive to interference.  As a result each communication
consumes a large area, and this offsets the bandwidth savings of
using a large $b$.
On the other extreme, small $b$ corresponds to a small area because
the SINR threshold is much smaller than the interference-free SNR,
but this is offset by a large bandwidth which causes
$\lambda_{\epsilon}^T(b) \rightarrow 0$ as $b \rightarrow 0$.

This behavior is illustrated in Fig. \ref{fig-lambda_wideband},
where $\lambda_{\epsilon}^T(b)$ (with $\kappa=1$) is plotted versus
$b$ for $\alpha=2.2, 3$, and $4$ at $\Eb = -0.82$ dB (for which $\Ce
\Ebp = 0.5$ bps/Hz).  Choosing $b$ near either extreme leads to very
poor performance for any $\alpha$. Notice that all three curves
intersect when $b$ satisfies $\frac{1}{2^b-1} - \frac{1}{b \Eb} =
1$.  This condition is satisfied when the SINR threshold is equal to
$\frac{ \textrm{SNR}}{1 + \textrm{SNR}}$, where $\textrm{SNR}$ is
the interference-free SNR, and thus $b = \log_2 \left( 1 + \frac{
\textrm{SNR} }{1 + \textrm{SNR}} \right)$. By a simple calculation
using tools from \cite{verdu_wide}, the intersection point
corresponds to $b = \frac{1}{3} \Ce \Ebp$.  Although this choice of
spectral efficiency is only optimal for $\alpha=3$, it is quite
close to optimal for path loss exponents that are not too near $2$
and thus is a robust operating point in the power-limited regime,
analogous to the choice $b=1$ in the interference-limited regime.

\begin{figure}
\centering
\includegraphics[width=4in]{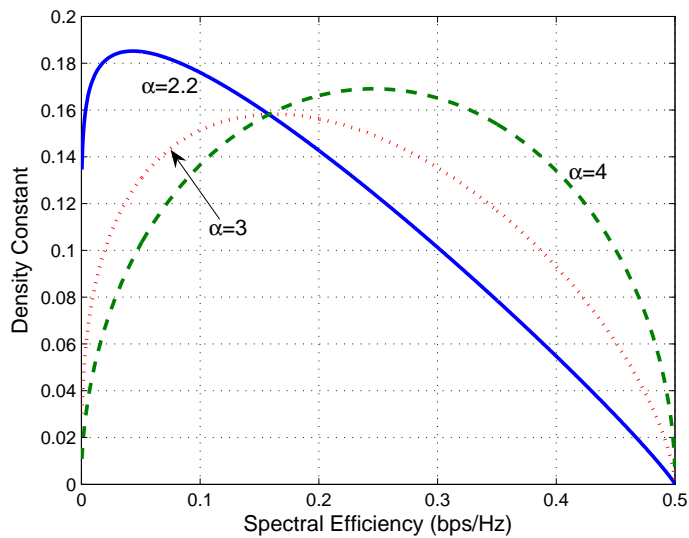}
\caption{Density Constant $\frac{\lambda_{\epsilon}^T (b)}{\kappa}$
vs. Spectral Efficiency $b$ for Power-Limited Networks.}
\label{fig-lambda_wideband}
\end{figure}

Finally, note that multi-user interference decreases the marginal
benefit of increased power (i.e., $\Eb$) as compared to an
interference-free channel. The analogous quantity for the spatial
network considered here is the \textit{area spectral efficiency}
(ASE), which is computed relative to the total bandwidth of $W$ Hz
and is equal to $\lambda_{\epsilon}^T(b^*) \left(\frac{R}{W}\right)$
bps/Hz per $m^2$.  In an AWGN channel, spectral efficiency increases
at a slope of 2 bps/Hz per 3 dB in the wideband regime
($\mathcal{S}_0=2$) \cite{verdu_wide}, while (\ref{eq-lambda_wide})
implies that ASE increases only at a rate of $2^{1-\delta}  \left(
(1-\delta)^{(1-\delta)} \delta^{\delta} \right)$ ($<2$) bps/Hz per 3
dB.



\section{Numerical Results and Extensions}

In the following we present numerical results to illustrate the
value of bandwidth partitioning. With system parameters chosen as
$\epsilon=0.1$, $N_0 = 10^{-6}$, $\alpha=4$, $d=10$, $R = 1$ Mbps,
and $W=10$ MHz, the total density $\lambda_{\epsilon}^T(N)$ is
computed via full Monte Carlo simulation (of outage probability at
different densities) and with equation (\ref{eq-tc_approx1}) using
the numerically computed value $F_Z^{-1}(0.1) = 0.1015$. Fig.
\ref{fig-numerical} contains plots of both quantities for $\Eb$
equal to $30$, $20$, $5$, and $0$ dB, and the curves match almost
exactly with any difference due purely to simulation error.
According to the chosen parameters we have $N^* = \frac{b^*}{R/W} =
10 b^*$ and $\kappa=0.0032$.  Note that the optimizing spectral
efficiency $b^*$ and the value of $\frac{\lambda_{\epsilon}^T \left(
b^* \right)}{\kappa}$ can be read from Fig. \ref{fig-alpha4}. The
top two set of curves are for $\Eb = 30$ dB and $\Eb = 20$ dB, both
of which correspond to the interference-limited regime where
$b^*=2.3$ bps/Hz ($N^* = 23$).
The curves are nearly indistinguishable near the optimal $N^*$
because performance is essentially independent of $\Eb$ in the
interference-limited regime.
The middle set of curves correspond to $\rho= \Eb= 5$ dB, which is
between the two extremes. At this point $b^*=1.5$ bps/Hz ($N^* =
15$) and $\frac{\lambda_{\epsilon}^T (b^*)}{k} = 0.8$; reducing
power by 15 dB while keeping all other parameters fixed reduces
density/ASE by approximately a third.  The bottom curves correspond
to $\rho= \Eb = 0$ dB, which is in the wideband regime. At this
point $\Ce \Ebp=1$ and $b^*=(1-\frac{\alpha}{2})\Ce \Ebp=0.5$ bps/Hz
($N^*=5$), and the area spectral efficiency is reduced to $0.1$.

\begin{figure}
\centering
\includegraphics[width=4in]{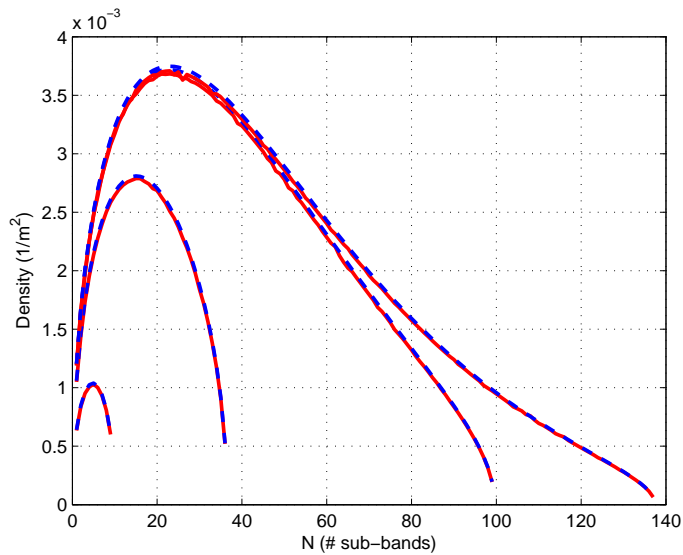}
\caption{Numerically computed $\lambda_{\epsilon}^T (N)$ versus $N$
for $\alpha=4$ and $\Eb=$ $30,~20,~5$, and $0$ dB (top to bottom).
Solid and dotted lines were computed using (\ref{eq-tc_approx1}) and
pure Monte Carlo simulation, respectively. } \label{fig-numerical}
\end{figure}

While $\Eb$ is generally thought to be adjusted by varying transmit
power, it can also be adjusted by fixing the power and varying the
rate $R$. The area spectral efficiency $\lambda_{\epsilon}^T (b^*)
\frac{R}{W}$ bps/Hz/${\textrm m}^2$ is equal to
$\left(\frac{-\log_e(1-\epsilon)}{\pi d^2} \right) b^*
\left(2^{b^*}-1 \right)^{- \frac{2}{\alpha}}$, and therefore depends
only on $\Eb$ but not on the particular values of $R$ and $\rho$. As
a result, if a network is operating outside of the
interference-limited regime, ASE can be tremendously increased by
either increasing power or decreasing rate (while keeping power
fixed).  However, this is only possible until the
interference-limited regime is reached; once there, ASE is
unaffected by $\Eb$.

\subsection{Direct Sequence Spread-Spectrum}

Direct-sequence (DS) spread spectrum is a well-established method
for spectrum sharing in wireless networks.  If DS is used with a
spreading factor of $N$, a signal with an information bandwidth
(i.e., symbol rate) of $\frac{W}{N}$ Hz can be spread across the
entire system bandwidth of $W$ Hz.  This is quite different than the
method investigated so far, which be thought of as either FDMA or
slow frequency-hopping.  In \cite{WebYan05} it is shown that DS is
significantly inferior to splitting the frequency band (FDMA) for
any particular bandwidth partition because it is preferable to avoid
interference (FDMA) rather than to suppress it (DS), and this
conclusion also holds if the bandwidth is optimally partitioned.

If direct sequence is used with completely separate despreading and
decoding (assuming spreading suppresses interference by a factor of
$N$), the SINR after despreading is given by:
\begin{eqnarray*}
\SINR_{\textrm{DS}} = \frac{ \rho d^{-\alpha}}{ N_0 (W/N) +
\frac{1}{N} \sum_{i \in \Pi(\lambda)} \rho X_i^{-\alpha}}.
\end{eqnarray*}
With some simple manipulation the outage probability is given by:
\begin{eqnarray}
\mathbb{P}[\SINR_{\textrm{DS}} \leq \beta(N)]
 &=& \mathbb{P} \left[\frac{ \rho d^{-\alpha}}{ N_0 W  + \sum_{i \in
\Pi(\lambda)} \rho X_i^{-\alpha}} \leq \frac{\beta(N)}{N} \right]
\end{eqnarray}
where $\beta(N)$ is defined in (\ref{eq-beta}).  Therefore, the
total transmission density for DS with spreading factor $N$ is equal
to $\lambda_{\epsilon}$ as defined in (\ref{eq-transcap}) with
threshold $\frac{\beta(N)}{N}$ and $\eta = N_0 W$. However,
$\frac{\beta(N)}{N}$ is an \textit{increasing} function of $N$ and
thus total density monotonically \emph{decreases} with $N$ if DS is
used. Direct-sequence increases SINR by at most a factor of $N$, but
this gain is offset by the fact that the SINR threshold increases at
least linearly with $N$. As a result a DS system performs no better
than an FDMA/FH system with $N=1$, which corresponds to
$\lambda_{\epsilon}^T(1)$ in (\ref{eq-tc_approx1}) and which is
generally much smaller than the optimal $\lambda_{\epsilon}^T(N^*)$.
Although DS has strengths unrelated to spectral efficiency, such as
security and MAC design \cite{AndWebHae07}, these benefits come at a
significant performance penalty.

\subsection{Effect of Frequency-Flat Fading and Variable TX-RX Distances} \label{sec-fading}

In the presence of fading and variable distances, the SINR
expression becomes:
\begin{eqnarray*}
\SINR_0 = \frac{ \rho d^{-\alpha} h_0} { \eta + \sum_{i \in
\Pi(\lambda)} \rho X_i^{-\alpha} h_i},
\end{eqnarray*}
where $h_i$ denotes the power of the fading coefficient from TX $i$
to the reference receiver, $h_0, h_1, \ldots$ are chosen iid
according to some distribution $F_H$, and $d$ is a random variable
chosen according to distribution $F_D$.  If we define 
$Z \triangleq \left( \sum_{i \in \Pi(1/\pi)} h_i Z_i^{-\alpha}
\right)^{-\frac{2}{\alpha}}$, and $G = d^{-\alpha}h_0$, then simple
manipulation yields:
\begin{eqnarray}
P_{\textrm{out}}(\lambda, \beta, \eta)
&=& \mathbb{P} \left( Z^{-\frac{\alpha}{2}} \geq \left( \pi \lambda
\right)^{-\frac{\alpha}{2}} \left( \frac{ G}{ \beta } - \frac{
\eta }{\rho}  \right) \right) \\
&=&  \mathbb{P} \left( G \leq \frac{ \beta \eta }{\rho} \right) +
\mathbb{P} \left( Z \leq \pi \lambda \left( \frac{ G}{ \beta } -
\frac{ \eta }{\rho} \right)^{-\frac{2}{\alpha}} \left| G \geq \frac{
\beta \eta }{\rho} \right. \right) \mathbb{P} \left( G \geq \frac{
\beta \eta }{\rho} \right) \label{eq-outage_fading_gen}
\end{eqnarray}
The first term is the probability of an outage due to insufficient
received signal power, i.e., $G$ is so small that the
interference-free SNR is below the SINR threshold, while the second
is the probability of outage conditioned on sufficient signal power.
Because of the somewhat involved expression for outage probability,
it is more convenient to consider bandwidth partitioning in terms of
outage minimization rather than density maximization.
In the purely interference-limited regime ($N_0 = 0$), the first
term in (\ref{eq-outage_fading_gen}) disappears and the outage
probability (in terms of $N$) is given by:
\begin{eqnarray*}
P_{\textrm{out}}(N) &=& \mathbb{P} \left( Z G^{\frac{2}{\alpha}}
\leq \pi \left( \frac{\lambda}{N} \right) \left( 2^{\frac{NR}{W}} -
1 \right)^{\frac{2}{\alpha}} \right)  = \mathbb{P} \left( Z
G^{\frac{2}{\alpha}} \leq \pi \lambda \left( \frac{R}{W} \right)
\frac{1}{b} \left( 2^{b} - 1 \right)^{\frac{2}{\alpha}} \right),
\end{eqnarray*}
where we have again used $b = \frac{NR}{W}$. Outage is minimized by
minimizing $\frac{1}{b} \left( 2^{b} - 1
\right)^{\frac{2}{\alpha}}$, which is clearly equivalent to the
problem solved in Section \ref{sec-int_limited}.  Thus, \textit{the
interference-limited solution given in Theorem \ref{thm-int-limited}
is also optimal in the presence of fading and variable distances.}

However, the same is not true when there is positive noise power. By
substituting the appropriate values into
(\ref{eq-outage_fading_gen}) and manipulating the second addend,
outage is characterized as:
\begin{eqnarray} \nonumber
P_{\textrm{out}}(N) &=& \mathbb{P} \left( G \leq g^* \right) +
\int_{g^*}^{\infty}  F_Z \left( \pi \frac{\lambda}{N}
x^{-\frac{2}{\alpha}} \left( \frac{1 }{ 2^{\frac{NR}{W}} - 1 } -
\frac{ N_0 W }{N \rho x}
\right)^{-\frac{2}{\alpha}} \right)  f_{G} (x) dx \\
&=& \mathbb{P} \left( G \leq g^* \right) + \int_{g^*}^{\infty} F_Z
\left( \pi \lambda \frac{R}{W} x^{-\frac{2}{\alpha}} \frac{1}{b}
\left( \frac{1 }{ 2^b - 1 } - \frac{ N_0 R }{b \rho x}
\right)^{-\frac{2}{\alpha}} \right) f_{G} (x) dx
\label{eq-outage_fading}
\end{eqnarray}
where $g^* = \left(2^{\frac{NR}{W}} - 1 \right) \left( \frac{N_0
W}{N \rho} \right) = \left( \frac{2^b - 1 }{b} \right) \left(
\frac{N_0 R}{\rho} \right)$.  The first term, which represents
outage due to insufficient received power, increases with $N$
because $g^*$ is an increasing function of $N$.  The integrand in
the second term is the outage probability conditioned on $G = x$,
and is precisely of the form investigated earlier with $\Eb =
\frac{\rho x}{N_0 R}$. Therefore, Theorem \ref{thm-main}
characterizes the value of $N$ that minimizes the integrand for
\textit{each} value of $x$, but does not generally characterize the
minimizer of (\ref{eq-outage_fading}).
However, the solution from Theorem \ref{thm-main} does become
increasingly accurate as transmission power is increased (i.e., the
interference-limited regime is approached) and as the variation in
the fading and TX-RX distances decreases. Increasing power causes
the first term in (\ref{eq-outage_fading}) to decrease and
eventually become negligible, while decreasing variation in $G$
reduces variation in the effective energy per bit $\Eb = \frac{\rho
x}{N_0 R}$.

To illustrate this, Fig. \ref{fig-fading} displays the outage
minimizing value of $N$ (computed via Monte Carlo) versus $\Eb$ for
four different settings: Rayleigh fading and Nakagami fading ($m=5$)
for fixed $d=10$, and no fading and variable distances for $d$
uniform in $[8,12]$ and $[5,15]$. The relevant parameters are:
$W=5$ MHz, $N_0 = 10^{-6}$, $R=1$ Mbps, $\lambda = \frac{.01}{\pi}$
$\textrm{m}^{-2}$, $\alpha=4$. The jitter in the curves is due to 
simulation error.  For sufficiently large $\Eb$, the
optimal does indeed converge to the optimal value for a purely
interference-limited ($N_0 = 0$) network.  Furthermore, the
optimizing $N$ tends towards the Theorem \ref{thm-main} solution for
more benign fading (Nakagami) and for smaller distance 
variation.\footnote{Our recent work has shown that there can be a 
substantial benefit to reducing variation in received signal power 
by adjusting transmit power to \textit{partially} compensate for 
reduced signal power \cite{FPC_Wireless}; thus, systems with relatively 
small signal power variation are particularly relevant.}
Based on (\ref{eq-outage_fading}), the truly optimal $N$ seems to
depend on the particular fading and distance distributions and appears
somewhat intractable; further investigation is left
for future work.

\begin{figure}
\centering
\includegraphics[width=4in]{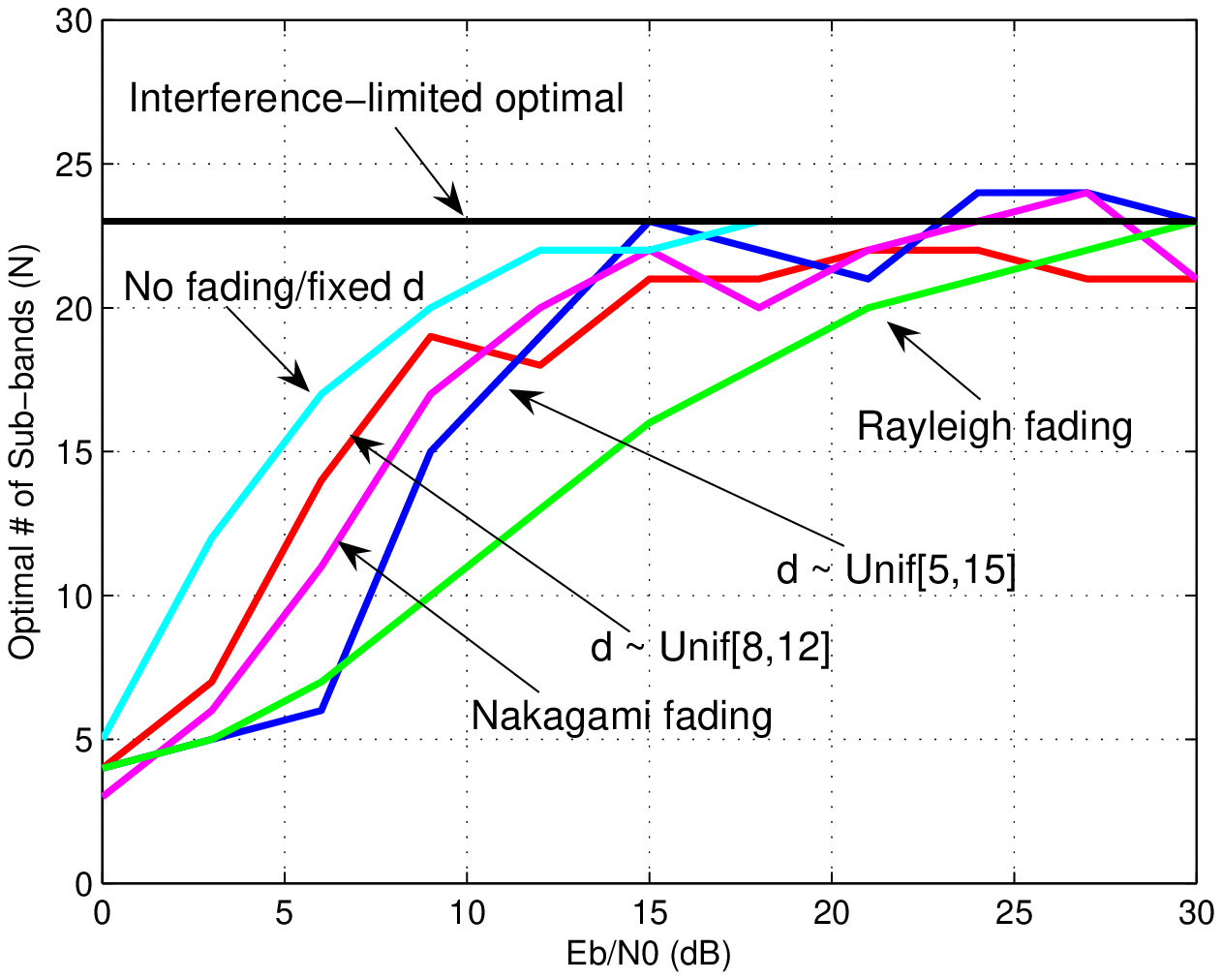}
\caption{Optimal Value of $N$ vs. $\Eb$} \label{fig-fading}
\end{figure}


\section{Conclusion}

In this work we studied the problem of bandwidth partitioning
in a decentralized network and derived the optimal operating
spectral efficiency, assuming multi-user interference is
treated as noise and no transmission scheduling is performed.
A network can operate at this optimal point by dividing the total
bandwidth into sub-bands sized such that each transmission
occurs on one of the sub-bands at precisely the optimal spectral
efficiency.

The essence of this problem is determining the optimum balance between
the \emph{time-frequency resources} and \emph{area resources} consumed by
each transmission.  Using many time-frequency resources to transmit
a finite number of bits corresponds to operating at a low spectral efficiency.
This translates to a small required SINR, and thus only a small
area must be free of interfering transmissions. On the other hand, using few
time-frequency resources corresponds to a large spectral efficiency
and in turn a large SINR and interferer-free area.
Our analysis shows that the optimal depends only on the path loss exponent and
energy per information bit.  If thermal noise is negligible the
optimal spectral efficiency lies between the low- and high-SNR extremes,
while in the power-limited regime the optimal is a fraction of the
maximum possible spectral efficiency in the absence of interference.
Furthermore, the optimal spectral efficiency is always an increasing
function of the path loss exponent and of the energy per information bit.

\appendices

\section{Proof of Theorem \ref{thm-main}} \label{app-proof-main}

In order to prove the result it is convenient to work with natural
logarithms:
\begin{eqnarray}
\lambda_{\epsilon}^T (b) = \kappa b \left( \frac{1}{2^b-1} -
\frac{1}{b} \frac{1}{\Eb} \right)^{\frac{2}{\alpha}}
&=& \left(\frac{\kappa}{\log_e 2}\right) n  \left( \frac{1}{\ex^n-1}
- \frac{1}{n} \frac{1}{\Ebt} \right)^{\frac{2}{\alpha}} =
\lambda_{\epsilon}^T (n)
\end{eqnarray}
where $n \triangleq b \log_e 2$ and $\Ebt \triangleq
\frac{\Eb}{\log_e 2}$.
Ignoring constant $\frac{\kappa}{\log_e 2}$ and defining $\delta =
\frac{2}{\alpha}$, the first derivative is:
\begin{eqnarray}
\frac{d}{dn}\left[ \lambda_{\epsilon}^T(n) \right] &=&
\left(\frac{1}{\ex^n-1}-\frac{1}{\Ebt n} \right)^{\delta} + n \delta
\left(\frac{1}{\ex^n-1}-\frac{1}{\Ebt n} \right)^{\delta-1}
\left(\frac{-\ex^n }{\left(\ex^n-1 \right)^2}+\frac{1}{\Ebt n^2}
\right) \\
&=& \frac{ \left(\frac{1}{\ex^n-1}-\frac{1}{\Ebt n} \right)^{\delta
- 1} }{\Ebt n \left(\ex^n-1 \right)^2} \left[ \Ebt n \left(\ex^n-1
\right)  - \Ebt \delta n^2  \ex^n - \left(1-\delta \right)
\left(\ex^n-1 \right)^2 \right].
\end{eqnarray}
Because the first term is positive for any $n>0$, the derivative is
equal to zero if and only if:
\begin{equation}  \label{eq-fixed_n}
\Ebt n \left(\ex^n-1 \right)  - \Ebt \delta n^2  \ex^n -
\left(1-\delta \right) \left(\ex^n-1 \right)^2 = 0.
\end{equation}
Substituting $n = b \log_e 2$ and $\Ebt = \frac{\Eb}{\log_e 2}$
yields the fixed point equation in (\ref{eq-main}). Although
$\lambda_{\epsilon}^T(n)$ is neither convex nor concave, we can show
it has a unique maximizer at the unique positive solution to the
above equation.  It is easy to check that $\lambda_{\epsilon}^T(0) =
\lambda_{\epsilon}^T(n_{\rm max}) = 0$ and $\lambda_{\epsilon}^T(n)
> 0 $ for $0 < n < n_{\rm max}$, where $n_{\rm max}= \Ce \Ebp \log_e
2$.
Therefore the function is maximized at a point where its derivative
is zero. Furthermore, (\ref{eq-fixed_n}) is satisfied at any point
where the derivative is zero and thus (\ref{eq-fixed_n}) must have
at least one positive solution. To show that (\ref{eq-fixed_n}) has
a unique positive solution, define
$\nu(n) = \Ebt \delta n^2 \ex^n + (1-\delta) \left(\ex^n-1
\right)^2$ and $\nu(n) = \Ebt n \left(\ex^n-1 \right)$.
Equation (\ref{eq-fixed_n}) is satisfied if and only if $\mu(n) =
\nu(n)$. Note that $\mu(0) = \nu(0) = 0$ and $\mu(n) > 0$ and
$\nu(n) > 0$ for all $n > 0$. Simple calculations show that each
function is strictly convex. Hence $\mu(n),\nu(n)$ are positive
valued, non-decreasing, strictly convex functions, and based on this
it is straightforward to argue that there is at most one positive
solution of $\mu(n) = \nu(n)$.

To show $n^*(\Ebt, \delta)$ increases with $\Ebt$, define the LHS of
(\ref{eq-fixed_n}) as $f(n,\Ebt,\delta)$. By the properties shown
earlier, $f(n,\Ebt,\delta)>0$ for $0 < n < n^*(\Ebt,\delta)$ and
$f(n,\Ebt,\delta) < 0$ for $n > n^*(\Ebt,\delta)$.  As a result,
$n^*(\Ebt, \alpha)$ increases with $\Ebt$ if for any $\Ebt'
> \Ebt$, $f(n,\Ebt',\delta)>0$ for $0 < n < n^*(\Ebt,\delta)$.  To
prove this property, choose any $n, \Ebt$ such that
$f(n,\Ebt,\delta) = \Ebt \left( n \left(\ex^n-1 \right) - \delta n^2
\ex^n \right) - \left(1-\delta \right) \left(\ex^n-1 \right)^2 > 0$.
Since $\left(1-\delta \right) \left(\ex^n-1 \right)^2 > 0$ for any
$n$, this implies $n \left(\ex^n-1 \right) - \delta n^2 \ex^n > 0$.
Thus, for any $\Ebt' > \Ebt$:
\begin{eqnarray}
f(n,\Ebt',\delta) &=& \Ebt' \left( n \left(\ex^n-1 \right) - \delta
n^2 \ex^n \right) - \left(1-\delta \right) \left(\ex^n-1 \right)^2
> f(n,\Ebt,\delta) > 0.
\end{eqnarray}

By a similar argument, if $f(n,\Ebt,\delta)$ is a
\textit{decreasing} function of $\delta$ then $n^*(\Ebt, \delta)$
decreases with $\delta$, i.e., increases with $\alpha$.  To prove
this, note that the partial of $f(n,\Ebt,\delta)$ with respect to
$\delta$ is $\left(\ex^n-1 \right)^2 - \Ebt n^2 \ex^n$.  
Recall that $n \leq n^{\rm max}$ is equivalent to $\Ebt n \geq \ex^n-1$.  This allows:
\begin{equation}
\Ebt n^2 \ex^n - \left(\ex^n-1 \right)^2 \geq \left(\ex^n-1 \right)n \ex^n - \left(\ex^n-1 \right)^2 = \left(\ex^n-1 \right)\left(n \ex^n - \ex^n+1 \right) \geq 0.
\end{equation}
The last expression is nonnegative on account of the fact that the function 
$n \ex^n - (\ex^n-1)$ has derivative $n \ex^n \geq 0$.  
Thus $f(n,\Ebt,\delta)$ is decreasing in $\delta$, i.e., increasing in $\alpha$.

\section{Proof of Theorem \ref{thm-wideband}}
\label{app-wideband}

For convenience, we again work with the function in natural log form
(see Appendix \ref{app-proof-main}).  To prove the result, we expand the
exponential terms (using $\ex^x = \sum_{k=0}^{\infty} \frac{x^k}{k!}$) in
(\ref{eq-fixed_n}) to give:
\begin{equation}
\Ebt \delta n^2 (1 + n + O(n^2)) -
\Ebt n \left(n + \frac{n^2}{2} + O(n^3) \right) +
(1-\delta) \left( n^2 + n^3 + O(n^4)\right)^2 = 0.
\end{equation}
Cancelling a factor of $n^2$ throughout yields
\begin{eqnarray}
\Ebt \delta \left( 1 + n + O(n^2) \right) -
\Ebt \left(1 + \frac{1}{2}n + O(n^2) \right) +
(1-\delta)\left(1 + n + O(n^2) \right) = 0,
\end{eqnarray}
which can be solved to yield a solution that is accurate to within a quadratic term:
\begin{eqnarray}
n^* &=& \frac{\Ebt(1-\delta) + (\delta-1)}
{\Ebt \left(\delta-\frac{1}{2}\right)+(1-\delta)} + O(n^2).
\end{eqnarray}

We are interested in the behavior of $\frac{b^*}{\Ce \Ebp}$ as $\Eb
\rightarrow 0$ (or equivalently $\Ce \Ebp \rightarrow 0$).  Because
$n^* = b^* \log_e 2$ and $n_{\rm max} = \Ce \Ebp \log_e 2$ we can
equivalently evaluate $\frac{n^*}{n_{\rm max}}$
\begin{eqnarray}
\lim_{n_{\rm max} \rightarrow 0}  ~ \left( \frac{n^*}{n_{\rm max}} \right)
= \lim_{n_{\rm max} \rightarrow 0}  ~
\frac{1}{n_{\rm max}} \left(
\frac{\Ebt(1-\delta) + (\delta-1)}
{\Ebt \left(\delta-\frac{1}{2}\right)+(1-\delta)} \right).
\end{eqnarray}
By plugging in $\frac{\ex^{n_{\rm max}}-1}{n_{\rm max}} = \Ebt$  and
using L'Hospital's rule, we can show the above limit is $1-\delta$,
which implies $n^* = (1 - \delta) n_{\rm max} + O(n^2)$, which in turn
gives the final result:
\begin{eqnarray*}
b^* = \left(1 - \frac{2}{\alpha} \right) \Ce \Ebp\left( \Eb \right)
+ O \left( b^2 \right).
\end{eqnarray*}

Because our approximation is accurate within a quadratic, we have the following:
\begin{eqnarray}
\lim_{\Ce \Ebp \rightarrow 0}  ~
\frac{\lambda_{\epsilon}^T(b^*)}{\Ce \Ebp} = \lim_{\Ce \Ebp
\rightarrow 0}  ~ \frac{\lambda_{\epsilon}^T\left( \left(1 - \delta
\right) \Ce \Ebp\right)}{\Ce \Ebp}.
\end{eqnarray}
By working with the natural log version of this equation and
plugging in $\frac{\ex^{n_{\rm max}}-1}{n_{\rm max}} = \Ebt$,
L'Hospital's rule can be used to show that this limit is equal to
$\kappa (1-\delta)^{(1-\delta)} \delta^{\delta} 2^{-\delta}$, which
yields (\ref{eq-lambda_wide}).

\end{document}